\newtheorem{theorem}{Theorem}
\newtheorem*{result}{Result 1}
\newtheorem*{corollary}{Corollary}
\newcommand{\raisecapt}{\vspace{-0.1cm}}
\newcommand{\myVM}[3]{\mathbf{#1}_{\mathrm{#2}}^{#3}} 
\newcommand{\myVMIndex}[4]{\mathbf{#1}_{\mathrm{#2},#3}^{#4}} 
\newcommand{\Norm}[3]{\left\lVert #1 \right\rVert_{#2}^{#3}} 
\newcommand{\tr}[3]{\mathrm{tr}\left\{\mathbf{#1}_{\mathrm{#2}}^{#3}\right\}}
\newcommand{\eq}[1]{(#1)}
\newcommand{\Diag}[1]{\mathrm{diag}\{#1\}}
\newcommand{\Abs}[2]{\left\lvert #1 \right\rvert^{#2}}
\newcommand{\Exp}[1]{\mathbb{E}\left\{ #1 \right\}}
\newcommand{\Var}[1]{\mathrm{Var}\left\{ #1 \right\}}
\newcommand{\LagP}[1]{L_{1/2}\left( #1 \right)}
\newcommand{\betad}[0]{\beta_{\mathrm{d}}}
\newcommand{\betabr}[0]{\beta_{\mathrm{br}}}
\newcommand{\betaru}[0]{\beta_{\mathrm{ru}}}
\newcommand{\etad}[0]{\eta_{\mathrm{d}}}
\newcommand{\zetad}[0]{\zeta_{\mathrm{d}}}
\newcommand{\etaru}[0]{\eta_{\mathrm{ru}}}
\newcommand{\zetaru}[0]{\zeta_{\mathrm{ru}}}
\newcommand{\kd}[0]{\kappa_{\mathrm{d}}}
\newcommand{\kru}[0]{\kappa_{\mathrm{ru}}}
\newcommand{\rhoruij}[1]{\lvert \rho_{ik} \rvert^{#1}}
\newcommand{\ad}[1]{\myVM{a}{d}{#1}}
\newcommand{\hdNLOS}[1]{\myVM{R}{d}{1/2} \myVM{u}{d}{#1}}
\newcommand{\hdtil}[1]{\myVM{\tilde{h}}{d}{#1}}
\newcommand{\aru}[1]{\myVM{a}{ru}{#1}}
\newcommand{\hrutil}[1]{\myVM{\tilde{h}}{ru}{#1}}
\begin{document}

\bstctlcite{IEEEexample:BSTcontrol} 

\title{Optimal SNR Analysis for Single-user RIS Systems in Ricean and Rayleigh Environments}

\author{\IEEEauthorblockN{%
		Ikram Singh\IEEEauthorrefmark{1}, %
		Peter J. Smith\IEEEauthorrefmark{2}, %
		Pawel A. Dmochowski\IEEEauthorrefmark{1}}
	\IEEEauthorblockA{\IEEEauthorrefmark{1}%
		School of Engineering and Computer Science, Victoria University of Wellington, Wellington, New Zealand}
	\IEEEauthorblockA{\IEEEauthorrefmark{2}%
		School of Mathematics and Statistics, Victoria University of Wellington, Wellington, New Zealand}
	\IEEEauthorblockA{email:%
		~\{ikram.singh,peter.smith,pawel.dmochowski\}@ecs.vuw.ac.nz
	}%
}

\maketitle
\begin{abstract}
We present an analysis of the optimal uplink (UL) SNR of a SIMO Reconfigurable Intelligent Surface (RIS)-aided wireless link. We assume that the channel between base station (BS) and RIS is a rank-1 LOS channel while the user (UE)-RIS and UE-BS channels are correlated Ricean. For the optimal RIS matrix, we derive an exact closed form expression for the mean SNR and an approximation for the SNR variance leading to an accurate gamma approximation to the distribution of the UL SNR. Furthermore, we analytically characterise the effects of correlation and the Ricean K-factor on SNR, showing that increasing the K-factor and correlation in the UE-BS channel can have negative effects on the mean SNR, while increasing the K-factor and correlation in the UE-RIS channel improves system performance. We also present favourable and unfavourable channel scenarios which provide insight into the sort of environments that improve or degrade the mean SNR. We also show that the relative gain in the mean SNR when transitioning from an unfavourable to a favourable environment saturates to $(4-\pi)/\pi$ as $N \rightarrow \infty $.
\end{abstract}
\IEEEpeerreviewmaketitle
%
%
\section{Introduction}
Reconfigurable Intelligent Surface (RIS) aided wireless networks are currently the subject of considerable research attention due to their ability to manipulate the channel between users (UEs) and base station (BS) via the RIS. Assuming that channel state information (CSI) is known at the RIS, one can intelligently alter the RIS phases, essentially changing the channel to improve system performance. Here, we focus on a single user system and assume a common system scenario where a RIS is carefully located near the BS such that a rank-1 line-of-sight (LOS) channel is formed between the BS and RIS.  
System scenarios with a LOS channel between the BS-RIS and a single-user are also considered in \cite{GMD,Rank_Imp,Max_Min,OptPHI} with motivation for the LOS assumption given in \cite{Max_Min}. All of these existing works aim to enhance the system to achieve some optimal system performance (sum rate, SINR, etc.) by tuning the RIS phases. In particular, \cite{Max_Min} gives a closed form RIS phase solution without the presence of a direct UE-BS channel for a single user setting, while \cite{OptPHI} gives a closed form phase solution with the presence of a direct channel. We note, however, once the optimal RIS has been defined there is no \textit{exact} analysis of the mean SNR and no analysis of correlation impact on the mean SNR in \cite{GMD,Rank_Imp,Max_Min,OptPHI}.

For the UE to RIS and the direct UE to BS links, the presence of scattering is a more reasonable assumption as is spatial correlation in the channels, especially at the RIS where small inter-element spacing may be envisaged. Several papers do consider spatial correlation in the small-scale fading channels \cite{Max_Min,Prop_Gauss,B5G,Nadeem,Trans_Design,StatCSI,Nadeem2,Two_Timescale}, however, these papers are simulation based and no analysis is given on the impact of correlation on the mean SNR. 

Statistical properties of the RIS-aided channel have been investigated in existing literature. For example, \cite{GammaApproxY,Relay_Comp} provide a closed form expression for the mean SNR in the absence of a UE-BS channel with \cite{Relay_Comp} additionally providing a probability density function (PDF) and a cumulative distribution function (CDF) for the distribution of the mean SNR. In \cite{GammaApproxY,Perf_Analy}, an upper bound is given for the ergodic capacity and in \cite{Asymp_Opt} a lower bound is given for the ergodic capacity. However, there are no closed form expressions for the mean SNR and SNR variance for an optimum RIS-aided wireless system, in the presence of correlated Ricean UE-BS, UE-RIS fading channels. We further note that the work presented in this paper is an extension for the work done in \cite{ISinghRayleigh} which considered correlated Rayleigh UE-BS, UE-RIS fading channels.

In this paper, we focus on an analysis of the optimal uplink (UL) SNR for a single user RIS aided link with a rank-1 LOS RIS-BS channel and correlated Ricean fading for the UE-BS and UE-RIS channels. The contributions of this paper are as follows:
\begin{itemize}
	\item An exact closed-form result for mean SNR and an approximate closed form expression for SNR variance are derived. These are used to show that a gamma distribution provides a good approximation of the UL optimal SNR distribution. Furthermore, using the analysis presented, we are able to reduce the mean SNR and SNR variance expressions when the UE-BS and UE-RIS links are correlated Rayleigh to agree with those in \cite{ISinghRayleigh}.
	\item The analysis is leveraged to gain insight into the impact of the K-factor and spatial correlation on the mean SNR. We show that increasing the K-factor and correlation in the UE-BS channel has negative effects, while increasing the K-factor and correlation in the UE-RIS channel improves the mean SNR. 
	\item Given the analyses, we present favourable and unfavourable channel scenarios, which provide insight into the sort of environments that would improve and degrade the mean SNR. For systems with a large number of RIS elements, we show that when changing from favourable to unfavourable channel scenarios, improvements in the mean SNR saturate at a relative gain of $(4-\pi)/\pi$ as $N\rightarrow\infty$. 
\end{itemize}
\textit{Notation:} $\Exp{\cdot}$ represents statistical expectation. $\Re\left\{ \cdot \right\}$ is the Real operator. $\Norm{\cdot}{2}{}$ denotes the $\ell_{2}$ norm. Upper and lower boldface letters represent matrices and vectors, respectively. $\mathcal{CN}(\boldsymbol{\mu},\mathbf{Q})$ denotes a complex Gaussian distribution with mean $\boldsymbol{\mu}$ and covariance matrix $\mathbf{Q}$. $\mathcal{U}[a,b]$ denotes a uniform random variable taking on values between $a$ and $b$. $\chi_{k}^{2}$ denotes a chi-squared distribution with $k$ degrees of freedom. $\mathbf{1}_n$ represents an $n \times n$ matrix with unit entries. The transpose, Hermitian transpose and complex conjugate operators are denoted as $(\cdot)^{T},(\cdot)^{H},(\cdot)^{*}$, respectively. The trace and diagonal operators are denoted by $\tr{\cdot}{}{}$ and $\text{diag}\{\cdot\}$, respectively. The angle of a vector $\myVM{x}{}{}$ of length $N$ is defined as $\angle \myVM{x}{}{} = [ \angle x_{1},\ldots, \angle x_{N} ]^{T}$ and the exponent of a vector is defined as $e^{\myVM{x}{}{}} = [ e^{x_{1}},\ldots,  e^{x_{N}} ]^{T}$. $\otimes$ denotes the Kronecker product. $L_{\nu}\{\cdot\}$ denotes a Laguerre function of non-integer degree $\nu$.

\section{System Model}
As shown in Fig.~\ref{Fig: System Model}, we examine a RIS aided single user single input multiple output (SIMO) system where a RIS with $N$ reflective elements is located close to a BS with $M$ antennas such that a rank-1 LOS condition is achieved between the RIS and BS. 
\begin{figure}[h]
	\centering
	\includegraphics[width=7cm]{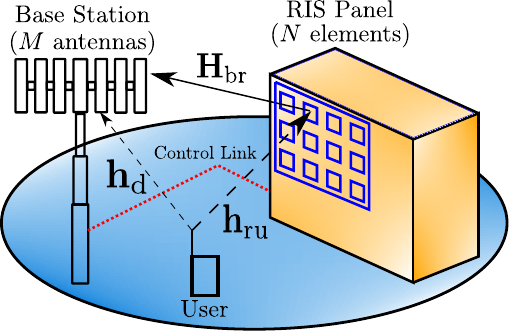}
	\raisecapt\caption{System model (the red dashed line is the control link for the RIS phases).}
	\label{Fig: System Model}
\end{figure}
\subsection{Channel Model}\label{Sec: Channel Model}
Let $\myVM{h}{d}{} \in \mathbb{C}^{M \times 1}$, $\myVM{h}{ru}{} \in \mathbb{C}^{N \times 1}$, $\myVM{H}{br}{} \in \mathbb{C}^{M \times N}$ be the UE-BS, UE-RIS and RIS-BS channels, respectively. The diagonal matrix $\myVM{\Phi}{}{} \in \mathbb{C}^{N \times N}$, where $\mathbf{\Phi}_{rr} = e^{j\phi_{r}}$ for $r=1,2,\ldots,N$, contains the reflection coefficients for each RIS element. The global UL channel is thus represented by
\begin{equation}\label{Eq: 1}
	\mathbf{h} = \myVM{h}{d}{} + \myVM{H}{br}{} \myVM{\Phi}{}{} \myVM{h}{ru}{},
\end{equation}
where we consider the correlated Ricean channels:  
\begin{align*}
	\myVM{h}{d}{} &= \sqrt{\betad}\left( \etad\myVM{a}{d}{} + \zetad\myVM{R}{d}{1/2}\myVM{u}{d}{} \right) \triangleq \sqrt{\betad}\tilde{\mathbf{h}}_{\mathrm{d}}\\
	\myVM{h}{ru}{} &= \sqrt{\betaru}\left( \etaru\myVM{a}{ru}{} + \zetaru\myVM{R}{ru}{1/2}\myVM{u}{ru}{} \right) \triangleq \sqrt{\betaru}\tilde{\mathbf{h}}_{\mathrm{ru}}
\end{align*}
with
\begin{align*}
	\etad = \sqrt{\frac{\kd}{1 + \kd}}, \qquad \zetad = \sqrt{\frac{1}{1 + \kd}}, \\
	\etaru = \sqrt{\frac{\kru}{1 + \kru}}, \qquad \zetaru = \sqrt{\frac{1}{1 + \kru}},
\end{align*}
where $\beta_{\textrm{d}}$ and $\beta_{\textrm{ru}}$ are the link gains between UE-BS and UE-RIS respectively, $\myVM{R}{d}{}$ and $\myVM{R}{ru}{}$ are the correlation matrices for UE-BS and UE-RIS links, respectively, and $\myVM{u}{d}{},\myVM{u}{ru}{} \sim \mathcal{CN}(\mathbf{0},\mathbf{I})$. Here, $\kd$ and $\kru$ are the Ricean K-factors for the UE-BS and UE-RIS channels, respectively. The LOS paths $\myVM{a}{d}{}$ and $\myVM{a}{ru}{}$ are topology specific steering vectors at the BS and RIS respectively. Particular examples of steering vectors for a vertical uniform rectangular array (VURA) are given in Sec.~\ref{Sec: Results}.

Note, that the correlation matrices $\myVM{R}{ru}{}$ and $\myVM{R}{d}{}$ can represent any correlation model. For simulation purposes, we will use the well-known exponential decay model,
\begin{equation}
(\myVM{R}{ru}{})_{ik}= \rho_{\mathrm{ru}}^{\frac{d_{i,k}}{d_{\textrm{r}}}} 
\quad , \quad
(\myVM{R}{d}{})_{ik} = \rho_{\mathrm{d}}^{\frac{d_{i,k}}{d_{\textrm{b}}}},
\end{equation}
where $0 \leq \rho_{\text{ru}} \leq 1$, $0 \leq \rho_{\text{d}} \leq 1$. $d_{i,k}$ is the distance between the $i^{\textrm{th}}$ and $k^{\textrm{th}}$ antenna/element at the BS/RIS. $d_{\textrm{b}}$ and $d_{\textrm{r}}$ are the nearest-neighbour BS antenna separation and RIS element separation, respectively, which are measured in wavelength units. $\rho_{\mathrm{d}}$ and $\rho_{\mathrm{ru}}$ are the nearest neighbour BS antenna and RIS element correlations, respectively.

The rank-1 LOS channel from RIS to BS has link gain $\beta_{\mathrm{br}}$ and is given by $\myVM{H}{br}{} = \sqrt{\beta_{\textrm{br}}}\myVM{a}{b}{}\myVM{a}{r}{H}$ where $\myVM{a}{b}{}$ and $\myVM{a}{r}{}$ are topology specific steering vectors at the BS and RIS, respectively. 

\subsection{Optimal SNR}\label{Sec: Optimal SNR}
Using  (\ref{Eq: 1}), the received signal at the BS is,
$
	\mathbf{r} = \left(\myVM{h}{d}{} + \myVM{H}{br}{} \myVM{\Phi}{}{} \myVM{h}{ru}{} \right)s + \mathbf{n} \triangleq \myVM{h}{}{}s + \mathbf{n},
$
where $s$ is the transmitted signal with power $E_{s}$ and $\mathbf{n} \sim \mathcal{CN}(\mathbf{0},\sigma^{2}\mathbf{I})$. For a single user, Matched Filtering (MF) is the optimal combining method so the optimal UL SNR is,
$
	\text{SNR} = \myVM{{h}}{}{H}\myVM{{h}}{}{}\bar{\tau},
$
where $\bar{\tau} = \frac{E_{s}}{\sigma^{2}}$. The optimal RIS phase matrix to maximize the SNR can be computed using the main steps outlined in \cite[Sec. III-B]{OptPHI} but using an UL channel model instead of downlink. Substituting the channel vectors and through some matrix algebraic manipulation, the optimal RIS phase matrix is,

\begin{equation}\label{Eq: Optimum PHI}
	\myVM{\Phi}{}{} = 
	\frac{\myVM{a}{b}{H} \myVM{h}{d}{}}{\Abs{\myVM{a}{b}{H} \myVM{h}{d}{}}{}}
	\Diag{e^{j\angle\myVM{a}{r}{}}}
	\Diag{e^{-j\angle\myVM{h}{ru}{}}}.
\end{equation}
Substituting \eq{\ref{Eq: Optimum PHI}} into $\myVM{h}{}{}$, the optimal UL global channel is,
\begin{equation}\label{Eq: Final r signal}
	\myVM{h}{}{} =  \myVM{h}{d}{} + \sqrt{\betabr\betaru} \psi \sum_{n=1}^{N}\Abs{\tilde{\mathbf{h}}_{\mathrm{ru},n}}{} \myVM{a}{b}{} ,
\end{equation}
where $\psi = \frac{\myVM{a}{b}{H} \myVM{h}{d}{}}{\Abs{\myVM{a}{b}{H} \myVM{h}{d}{}}{}}$, giving the optimal UL SNR as
\begin{equation}\label{Eq: SNR Eq}
	\text{SNR} = 
	\left( \myVM{h}{d}{H} + \alpha^{*}\myVM{a}{b}{H} \right)
	\left( \myVM{h}{d}{} + \alpha^{}\myVM{a}{b}{} \right)
	\bar{\tau},
\end{equation}
where $\alpha = \sqrt{\beta_{\textrm{br}}\beta_{\textrm{ru}}} \psi Y$ and $Y=\sum_{n=1}^{N}\Abs{\tilde{\mathbf{h}}_{\mathrm{ru},n}}{}$. The $Y$ variable  is given its own notation as it arises in many of the derivations for the SNR.

\section{Mean SNR and SNR Variance}\label{Sec: E{SNR} and Var{SNR}}
\subsection{Correlated Ricean Channels}
Here, we provide an exact closed form result for $\mathbb{E}\{\text{SNR}\}$ and an exact expression for $\Var{\text{SNR}}$ when $\myVM{h}{d}{}$ and $\myVM{h}{ru}{}$ are correlated Ricean channels. Utilising these expressions, we then show that they can be reduced to the expressions given in \cite{ISinghRayleigh} when $\myVM{h}{d}{}$ and $\myVM{h}{ru}{}$ are correlated Rayleigh channels.
\begin{theorem}
	The mean SNR is given by
	\begin{align}\label{Eq: Corr E{SNR} Ricean}
		& \Exp{\mathrm{SNR}} = \notag \\
		& \Bigg( \frac{NA\zetad\zetaru\pi\sqrt{\betad\betabr\betaru}}{2}
		\LagP{-\kru}\LagP{-\frac{\kd \Abs{\myVM{a}{b}{H}\ad{}}{2}}{A^{2}}} \notag \\
		& \quad + \beta_{\mathrm{d}}M + \beta_{\mathrm{br}}\beta_{\mathrm{ru}}M(N+F_{\mathrm{R}}) \Bigg) \bar{\tau},
	\end{align}
	with $F_{R}$ given by
	\begin{align}\label{Eq: Whittaker M}
		& F_{R} = 
		\underset{i \neq k}{\sum_{i=1}^{N} \sum_{k=1}^{N}} 
		\frac{\left(1 - \Abs{\rho_{ik}}{2}\right)^2}{1 + \kru} 
		\mathrm{exp}\left\{ -\frac{2\kru - 2\mu_c\kru}{1 - \Abs{\rho_{ik}}{2}}\right\} \notag \\
		& \times \sum_{m=0}^{\infty}\sum_{n=0}^{m} \frac{\epsilon_n \cos(n\phi) \Abs{\rho_{ik}}{2m - n}}{m! (m-n)! (n!)^2} \left( \frac{\kru(1 + \Abs{\rho_{ik}}{2} - 2\mu_c)}{1 - \Abs{\rho_{ik}}{2}} \right)^n \notag \\
		& \times \Gamma^2\left( m + \frac{3}{2}\right) 
		{}_{1}F_{1}^2\left(m+\frac{3}{2}, n+1, \frac{\kru(1 + \Abs{\rho_{ik}}{2} - 2\mu_c)}{1 - \Abs{\rho_{ik}}{2}}  \right)
	\end{align}
	and $A = \left\lVert \myVM{R}{d}{1/2} \myVM{a}{b}{} \right\rVert _{2}$, where $\rho_{ik} = \left(\myVM{R}{ru}{}\right)_{ik}$,  ${}_{1}F_{1}(\cdot)$ is the confluent hypergeometric function, $\phi = \arg\{ (1 + \Abs{\rho_{ik}}{2})\mu_c\kru - 2\kru\Abs{\rho_{ik}}{2} + j(1 - \Abs{\rho_{ik}}{2}\mu_s\kru)\}$, $\mu_c = \Abs{\rho_{ik}}{}\cos(\Delta\theta)$, $\mu_s = \Abs{\rho_{ik}}{}\sin(\Delta\theta)$, $\Delta\theta = \arg\{\mathbf{\aru{}}_{,i}\} - \arg\{\mathbf{\aru{}}_{,k}\}$ and $\epsilon_n$ is given by $\epsilon_0=1$ and $\epsilon_n=2$ for $n \ge 1$ \cite[Eq.~(26b)]{Yacoub}.
\end{theorem}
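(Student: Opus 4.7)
The plan is to decompose $\text{SNR}/\bar\tau$ from \eq{\ref{Eq: SNR Eq}} into three pieces and then exploit the independence of $\myVM{h}{d}{}$ and $\myVM{h}{ru}{}$. Expanding the quadratic form and using $\Norm{\myVM{a}{b}{}}{2}{2}=M$, I would write $\text{SNR}/\bar\tau = \Norm{\myVM{h}{d}{}}{2}{2} + 2\Re\{\alpha\, \myVM{h}{d}{H}\myVM{a}{b}{}\} + \Abs{\alpha}{2} M$. The key observation is that $\psi$ in \eq{\ref{Eq: Optimum PHI}} is chosen exactly as the unit-modulus phase of $\myVM{a}{b}{H}\myVM{h}{d}{}$, so $\psi\, \myVM{h}{d}{H}\myVM{a}{b}{}=\Abs{\myVM{a}{b}{H}\myVM{h}{d}{}}{}$ is non-negative real, the cross term collapses to $2\sqrt{\betabr\betaru}\, Y\, \Abs{\myVM{a}{b}{H}\myVM{h}{d}{}}{}$, and $\Abs{\alpha}{2}=\betabr\betaru Y^2$. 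Because $Y$ depends only on $\myVM{h}{ru}{}$ and $\Abs{\myVM{a}{b}{H}\myVM{h}{d}{}}{}$ only on $\myVM{h}{d}{}$, the three expectations to evaluate are $\Exp{\Norm{\myVM{h}{d}{}}{2}{2}}$, $\Exp{Y}\cdot\Exp{\Abs{\myVM{a}{b}{H}\myVM{h}{d}{}}{}}$ and $\Exp{Y^2}$.

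The first expectation is immediate from $\etad^2+\zetad^2=1$, $\Norm{\ad{}}{2}{2}=M$ and $\mathrm{tr}(\myVM{R}{d}{})=M$, giving $\betad M$. For the cross term, the scalar $\myVM{a}{b}{H}\myVM{h}{d}{}$ is a single complex Gaussian with mean $\sqrt{\betad}\etad\,\myVM{a}{b}{H}\ad{}$ and variance $\betad\zetad^2 A^2$, so $\Abs{\myVM{a}{b}{H}\myVM{h}{d}{}}{}$ is Rice-distributed and its first moment follows from the standard Laguerre-function formula, giving $\sqrt{\betad}\zetad A\,(\sqrt{\pi}/2)\,\LagP{-\kd\Abs{\myVM{a}{b}{H}\ad{}}{2}/A^2}$. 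Similarly, each marginal $\tilde h_{\mathrm{ru},n}\sim\mathcal{CN}(\etaru a_{\mathrm{ru},n},\zetaru^2)$ with $|a_{\mathrm{ru},n}|=1$ is Rice with identical noise power, hence $\Exp{Y}=N\zetaru(\sqrt{\pi}/2)\LagP{-\kru}$. Multiplying these two moments by the prefactor $2\sqrt{\betabr\betaru}$ produces exactly the $\LagP{\cdot}\LagP{\cdot}$ term of \eq{\ref{Eq: Corr E{SNR} Ricean}}.

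The main obstacle is $\Exp{Y^2}$. Expanding, $Y^2=\sum_i|\tilde h_{\mathrm{ru},i}|^2+\sum_{i\ne k}|\tilde h_{\mathrm{ru},i}|\,|\tilde h_{\mathrm{ru},k}|$, and the diagonal sum equals $N$ since each marginal has unit second moment $\etaru^2+\zetaru^2=1$. The off-diagonal contributions are expected products of the magnitudes of two correlated non-zero-mean complex Gaussians, i.e.\ bivariate-Ricean product moments. For each pair $(i,k)$, I would model $(\tilde h_{\mathrm{ru},i},\tilde h_{\mathrm{ru},k})$ as a bivariate complex Gaussian with correlation coefficient $\rho_{ik}$ and LOS means $\etaru a_{\mathrm{ru},i}$, $\etaru a_{\mathrm{ru},k}$, and then invoke Yacoub's bivariate-Ricean product-moment identity \cite[Eq.~(26b)]{Yacoub}. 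The delicate bookkeeping step, and the one I expect to be the hardest, is that the LOS phases of the two steering-vector entries differ by $\Delta\theta=\arg a_{\mathrm{ru},i}-\arg a_{\mathrm{ru},k}$; this phase rotates the complex correlation inside Yacoub's formula and is precisely what generates the auxiliary quantities $\mu_c$, $\mu_s$ and $\phi$ that appear in the summand of \eq{\ref{Eq: Whittaker M}}. After this substitution the double series in $(m,n)$ materialises as $F_R$. Finally, $\betabr\betaru M\,\Exp{Y^2}=\betabr\betaru M(N+F_R)$, and collecting the three contributions with the common factor $\bar\tau$ yields \eq{\ref{Eq: Corr E{SNR} Ricean}}.
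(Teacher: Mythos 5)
Your proposal follows essentially the same route as the paper's own proof: the identical three-term decomposition of the SNR, evaluation of the direct-channel power via $\etad^2+\zetad^2=1$ and $\mathrm{tr}\{\myVM{R}{d}{}\}=M$, the cross term as a product of two independent Ricean first moments written with $L_{1/2}$, and $\Exp{Y^2}=N+F_R$ with the off-diagonal terms handled by Yacoub's bivariate-Ricean product moment (your identification of the phase offset $\Delta\theta$ as the source of $\mu_c,\mu_s,\phi$ is exactly the point the paper stresses). The argument is correct and matches the paper's Appendix~\ref{AppA: E{SNR} Corr Deri}.
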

\begin{proof}
	See App.~\ref{AppA: E{SNR} Corr Deri} for the derivation of \eq{\ref{Eq: Corr E{SNR} Ricean}}.
\end{proof}
The formula for the mean SNR in \eqref{Eq: Corr E{SNR} Ricean} is compact and intuitive except for the term $F_R$ which is easier to understand from its definition in App.~\ref{AppA: E{SNR} Corr Deri} as
\begin{equation}\label{FR_defn}
F_R = \underset{i \neq j}{\sum_{i=1}^{N} \sum_{k=1}^{N}} \mathbb{E}\left\{ \Abs{\myVMIndex{\tilde{h}}{ru}{i}{}}{} \Abs{\myVMIndex{\tilde{h}}{ru}{k}{}}{}  \right\}. 
\end{equation} 
Note that evaluation of $F_R$ requires the product moment of two correlated Ricean variables. While compact expressions are given in \cite{Zogas} for the PDF of two correlated Ricean variables, it is important to note that these PDFs are for the special case where the LOS components are the same. Here,  there is  a phase shift that occurs between any two RIS elements and this assumption cannot be made. Hence, we have used the more general bivariate Ricean distribution in \cite{Yacoub}. When the UE-RIS channel is non-LOS (correlated Rayleigh) then the complexity in all the expressions reduces considerably.
\begin{theorem}
	The SNR Variance is given by
	\begin{align}\label{Eq: Corr Var{SNR} Ricean}
		& \Var{\mathrm{SNR}} = \notag \\
		& \Bigg(\betad^{2}\zetad^{2}\left[ 2\etad^{2}\Norm{\myVM{R}{d}{1/2}\myVM{a}{d}{}}{2}{2} + \zetad^{2}\tr{\myVM{R}{d}{2}}{}{}\right] \notag \\
		& + \betad^{3/2}\sqrt{\betabr\betaru} A N L_{\mathrm{ru}} \pi \left[\frac{2}{\sqrt{\pi}}(T_{21}+T_{22}+T_{23}) - M L_{\mathrm{d}}\right] \notag \\
		& + \betad\betabr\betaru\Bigg[ 4\left( \etad^{2}\Abs{\myVM{a}{b}{H}\ad{}}{2} + \zetad^{2}A^{2} \right)(N+F_{\mathrm{R}}) \notag \\
		& \qquad -\frac{N^2 A^2 \pi^2 L_{\mathrm{ru}}^{2} L_{\mathrm{d}}^{2} }{4}\Bigg] \notag \\
		& + \sqrt{\betad}(\betabr\betaru)^{3/2} M A L_{\mathrm{d}}\left[ 2\sqrt{\pi}C_{1} - N(N+F_{\mathrm{R}})\pi L_{\mathrm{ru}} \right] \notag \\
		& + (M\betabr\betaru)^{2}\left[ C_{2} - (N + F_{R})^{2} \right] \Bigg) \bar{\tau}^{2}
	\end{align}
with
\begin{align}
	& L_{\mathrm{ru}} = \zetaru \LagP{-\kru} \label{Eq: Lru} \\
	& L_{\mathrm{d}} = \zetad \LagP{-\kd \frac{\Abs{\myVM{a}{b}{H}\ad{}}{2}}{A^2}} \\
	& T_{21} = \frac{3B\zetad^{3}\sqrt{\pi}}{4} L_{3/2}\left(-C\right)  
	- \frac{2B\etad}{A}\Re\{\myVM{a}{b}{H}\ad{} \mathcal{I} \} \notag \\
	& \quad + \frac{L_{\mathrm{d}}}{2} \left[ \frac{B\kd\Abs{\ad{H}\myVM{a}{b}{}}{2}}{A^2 (1+\kd)} + \frac{\sqrt{\pi}(M-B)}{1+\kd}\right] 
\end{align}
\begin{align}
	& T_{22} = \frac{2\etad}{A}
	\Re\left\{
	\ad{H}\myVM{R}{d}{}\myVM{a}{b}{} \left[ \mathcal{I} - \frac{\etad\ad{H}\myVM{a}{b}{}\sqrt{\pi}}{2A} L_{\mathrm{d}} \right] \right\} \\
	& T_{23} = \frac{\etad^{2} M \sqrt{\pi}}{2} L_{\mathrm{d}} \\
	& \mathcal{I} = 
	\frac{-3 A \sqrt{\pi}}{4\sqrt{\kd}(1+\kd)} 
	\exp\left\{ j\angle \ad{H}\myVM{a}{b}{} - \frac{\kd\Abs{\ad{H}\myVM{a}{b}{}}{2}}{2 A^2} \right\} \notag \\
	& \quad \times M_{3/2,1/2}\left( -\frac{\kd\Abs{\ad{H}\myVM{a}{b}{}}{2}}{A^2} \right) \label{Eq: Fancy I}
\end{align}
where $A$ is given in Theorem 1, $F_{\mathrm{R}}$ is given by \eq{\ref{Eq: Whittaker M}}, $B = \Norm{\myVM{R}{d}{}\myVM{a}{b}{}}{2}{2}/A^2$, $C = \kd\Abs{\ad{H}\myVM{a}{b}{}}{2}/A^{2}$, $M_{3/2,1/2}(\cdot)$ is the Whittaker M function, $C_{1}=\Exp{Y^{3}}$ and $C_{2}=\Exp{Y^{4}}$.
\end{theorem}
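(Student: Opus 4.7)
The plan is to expand \eq{\ref{Eq: SNR Eq}} into three scalar pieces and reduce $\Var{\mathrm{SNR}}$ to a collection of univariate moments. Since $|\psi|=1$ and $\alpha^{*}\myVM{a}{b}{H}\myVM{h}{d}{} = \sqrt{\betabr\betaru}\,Y\Abs{\myVM{a}{b}{H}\myVM{h}{d}{}}{}$ is a non-negative real scalar, \eq{\ref{Eq: SNR Eq}} gives $\mathrm{SNR}/\bar{\tau} = X_{1}+X_{2}+X_{3}$ with
\begin{equation*}
X_{1}=\Norm{\myVM{h}{d}{}}{2}{2},\quad X_{2}=2\sqrt{\betabr\betaru}\,Y\Abs{\myVM{a}{b}{H}\myVM{h}{d}{}}{},\quad X_{3}=M\betabr\betaru\, Y^{2}.
\end{equation*}
Expanding $\Exp{(X_{1}+X_{2}+X_{3})^{2}}-\Exp{X_{1}+X_{2}+X_{3}}^{2}$ yields three square terms and three cross terms; since $\myVM{h}{d}{}$ and $\myVM{h}{ru}{}$ (which determines $Y$) are independent, every expectation factors into a $Y$-moment times an $\myVM{h}{d}{}$-moment. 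The required $Y$-moments $\Exp{Y}$ and $\Exp{Y^{2}}$ are already obtained in the proof of Theorem~1 (producing $L_{\mathrm{ru}}$ and $N+F_{\mathrm{R}}$ respectively), while $\Exp{Y^{3}}=C_{1}$ and $\Exp{Y^{4}}=C_{2}$ are retained symbolically and directly yield the final-line block $(M\betabr\betaru)^{2}[C_{2}-(N+F_{\mathrm{R}})^{2}]=\Var{X_{3}}$.

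The remaining $\myVM{h}{d}{}$-moments split into tractable and non-trivial pieces. The $\betad^{2}$ block is $\Var{X_{1}}$, the variance of a noncentral correlated complex Gaussian quadratic form, obtained via $\Var{(\myVM{v}{}{}+\myVM{w}{}{})^{H}(\myVM{v}{}{}+\myVM{w}{}{})}=2\myVM{v}{}{H}\myVM{Q}{}{}\myVM{v}{}{}+\mathrm{tr}\{\myVM{Q}{}{2}\}$ with mean $\etad\ad{}$ and covariance $\zetad^{2}\myVM{R}{d}{}$, producing $\tr{R}{d}{2}$ and $\Norm{\myVM{R}{d}{1/2}\ad{}}{2}{2}$. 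For $X_{2}$, the scalar $V\triangleq\myVM{a}{b}{H}\myVM{h}{d}{}$ is $\mathcal{CN}(\sqrt{\betad}\etad\myVM{a}{b}{H}\ad{},\,\betad\zetad^{2}A^{2})$, so $\Abs{V}{}$ is Rice-distributed with K-factor $C$; its first two moments are standard Laguerre expressions giving the $L_{\mathrm{d}}$ factor and $\etad^{2}\Abs{\ad{H}\myVM{a}{b}{}}{2}+\zetad^{2}A^{2}$. Substituting these into $\Exp{X_{2}^{2}}$, $\Exp{X_{2}X_{3}}$, $(\Exp{X_{2}})^{2}$ and $\Exp{X_{2}}\Exp{X_{3}}$ reproduces the $\betad\betabr\betaru$ and $\sqrt{\betad}(\betabr\betaru)^{3/2}$ blocks, including the $-N^{2}A^{2}\pi^{2}L_{\mathrm{ru}}^{2}L_{\mathrm{d}}^{2}/4$ piece arising from squaring $\Exp{X_{2}}$ in the mean-subtraction.

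The main obstacle is the mixed moment $\Exp{X_{1}X_{2}}\propto\Exp{\Norm{\myVM{h}{d}{}}{2}{2}\Abs{V}{}}$, which lives inside the $\betad^{3/2}\sqrt{\betabr\betaru}$ block. I would compute it by decomposing $\myVM{h}{d}{}$ into its conditional mean given $V$ plus an orthogonal Gaussian residual,
\begin{equation*}
\myVM{h}{d}{} = \sqrt{\betad}\etad\ad{} + \frac{\myVM{R}{d}{}\myVM{a}{b}{}}{A^{2}}\bigl(V-\sqrt{\betad}\etad\myVM{a}{b}{H}\ad{}\bigr) + \myVM{z}{}{},
\end{equation*}
where $\myVM{z}{}{}$ is complex Gaussian independent of $V$ with $\Exp{\Norm{\myVM{z}{}{}}{2}{2}}=\betad\zetad^{2}(\mathrm{tr}\{\myVM{R}{d}{}\}-B)$. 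Expanding $\Norm{\myVM{h}{d}{}}{2}{2}\Abs{V}{}$ under this decomposition then yields: a term quadratic in the residual $V-\Exp{V}$ times $\Abs{V}{}$ with coefficient $B$, whose Rice average produces the higher-order Laguerre $L_{3/2}(-C)$ contribution inside $T_{21}$; a cross term linear in $V$ whose real part is $\propto\etad\Re\{\ad{H}\myVM{R}{d}{}\myVM{a}{b}{}\,\mathcal{I}\}$ and defines $T_{22}$, where $\mathcal{I}$ is an integral against the Rice density of $V$ that reduces through $\int_{0}^{\infty}r^{2}e^{-\lambda r^{2}}I_{0}(2\mu r)\,dr$ to the Whittaker $M_{3/2,1/2}$ representation in \eq{\ref{Eq: Fancy I}}; and a constant-in-$V$ LOS piece giving $T_{23}$. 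Apart from this single Whittaker-integral identification, the remainder is mechanical grouping of the collected moments by powers of $\betad,\betabr,\betaru$ to assemble \eq{\ref{Eq: Corr Var{SNR} Ricean}}.
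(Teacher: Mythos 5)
Your proposal is correct and follows essentially the same route as the paper: the decomposition $\mathrm{SNR}/\bar{\tau}=X_1+X_2+X_3$ with the factorization over the independent $\myVM{h}{d}{}$ and $Y$ statistics is exactly the paper's six-term expansion of $\mathrm{SNR}^2$ in App.~B, and your conditional-mean-plus-orthogonal-residual treatment of $\Exp{\Norm{\myVM{h}{d}{}}{2}{2}\Abs{\myVM{a}{b}{H}\myVM{h}{d}{}}{}}$ is equivalent to the paper's orthonormal rotation with first column $\myVM{R}{d}{1/2}\myVM{a}{b}{}/A$, yielding the same $B$ versus $M-B$ split and the same $T_{21},T_{22},T_{23}$ and Whittaker-function term $\mathcal{I}$. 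The only slip is cosmetic: the angular integration in $\mathcal{I}=\Exp{(ax+b)\Abs{ax+b}{}}$ produces $\int_0^\infty \rho^3 e^{-\rho^2/a^2}I_1(\cdot)\,d\rho$ (order-one Bessel from the $e^{j\theta}$ weight), not an $I_0$ integral, but this does not affect the structure of the argument.
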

\begin{proof}
	See App.~\ref{App: Var{SNR} Corr Deri} and App.~\ref{App: Curly I} for the derivation of \eq{\ref{Eq: Corr Var{SNR} Ricean}}.
\end{proof}
In \eq{\ref{Eq: Corr Var{SNR} Ricean}}, all terms are known except for $C_{1}=\Exp{Y^3}$ and $C_{2}=\Exp{Y^4}$. To the best of our knowledge, these moments are intractable and approximations are required for these moments. Note that $Y$ is positive, unimodal, and the sum of $N$ random variables. A gamma distribution is commonly used to approximate such distributions, eg., \cite{GammaApproxY}. Since the first and second moments of $Y$ are known exactly, we are able to fit a gamma approximation to $Y$ with the correct first and second moments. Using this gamma approximation, we obtain approximations for $\Exp{Y^3}$ and $\Exp{Y^4}$ as presented in the following Corollary,
\begin{corollary}\label{Corollary}
	To obtain an approximation of the SNR variance in closed form, we approximate the 3$^{\text{rd}}$ and 4$^{\text{th}}$ moments of $Y$ by,
	\begin{align}\label{Eq: Corollary Ricean}
		\hspace{ -5pt}\Exp{Y^3}  & =  b^3 a \prod_{k=1}^{2}(k+a), \quad
		\Exp{Y^4}  =  b^4 a \prod_{k=1}^{3}(k+a)
	\end{align}
	with 
	\begin{align*}
		a & = \dfrac{N^{2} \pi L_{\mathrm{ru}}^{2}}{4(N+F_{R}) - N^{2} \pi L_{\mathrm{ru}}^{2}} \\
		b & = \dfrac{2}{N \sqrt{\pi} L_{\mathrm{ru}}}\left( N + F_{R} - \frac{N^2 \pi^2}{4} L_{\mathrm{ru}}^{2} \right)
	\end{align*}
	where $F_{\mathrm{R}}$ is given by \eq{\ref{Eq: Whittaker M}} and $L_{\mathrm{ru}}$ is given by \eq{\ref{Eq: Lru}}.
\end{corollary}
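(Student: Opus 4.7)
The corollary fits a gamma random variable to $Y$ by matching its first two moments exactly and then reads the third and fourth moments off the gamma family. My plan is to (i) motivate the gamma ansatz, (ii) extract $\Exp{Y}$ and $\Var{Y}$ from quantities already computed in the derivation of Theorem~1, (iii) solve for the shape and scale $a,b$ by moment matching, and (iv) specialise the standard gamma identity $\Exp{Z^{k}}=b^{k}a(a+1)\cdots(a+k-1)$ for $Z\sim\mathrm{Gamma}(a,b)$ to $k=3,4$. The gamma choice is natural because $Y$ is a sum of $N$ positive Ricean-magnitude summands, so it is positive, unimodal, and increasingly concentrated as $N$ grows; the family has moreover been validated on closely related RIS-style sums in \cite{GammaApproxY}. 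Correlation through $\myVM{R}{ru}{}$ does not spoil the heuristic, since its diagonals are unity and its off-diagonals merely reshape the variance, which is absorbed into the shape parameter.

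Next, the two matching moments are already implicit in the proof of Theorem~1. The per-element Ricean absolute-moment identity $\Exp{\Abs{\tilde{h}_{\mathrm{ru},n}}{}}=\tfrac{\sqrt{\pi}}{2}\zetaru\LagP{-\kru}=\tfrac{\sqrt{\pi}}{2}L_{\mathrm{ru}}$ sums to $\Exp{Y}=\tfrac{N\sqrt{\pi}}{2}L_{\mathrm{ru}}$. For $\Exp{Y^{2}}$ I would expand the square, use $\Exp{\Abs{\tilde{h}_{\mathrm{ru},n}}{2}}=\etaru^{2}+\zetaru^{2}=1$ on the diagonal, and recognise the off-diagonal cross-terms as exactly the sum defining $F_{R}$, giving $\Exp{Y^{2}}=N+F_{R}$ and consequently $\Var{Y}=N+F_{R}-\tfrac{N^{2}\pi}{4}L_{\mathrm{ru}}^{2}$.

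Moment matching through $\Exp{Y}=ab$ and $\Var{Y}=ab^{2}$ then forces $a=\Exp{Y}^{2}/\Var{Y}$ and $b=\Var{Y}/\Exp{Y}$; plugging in the closed forms above and clearing common factors recovers the stated $a$ and $b$, after which specialising $\Exp{Z^{k}}=b^{k}a(a+1)\cdots(a+k-1)$ to $k=3,4$ delivers the two claimed product-form expressions immediately. The only real obstacle is methodological rather than algebraic: matching two moments guarantees accuracy only for those two, and the implicit claim that the higher moments inherited from the gamma fit are close to the true ones cannot be proved by a pointwise error bound (no such bound is readily available for sums of correlated Ricean magnitudes). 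I would therefore defend this step empirically, for example through simulation in Sec.~\ref{Sec: Results}, rather than attempting a rigorous distributional approximation argument.
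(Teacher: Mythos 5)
Your proposal matches the paper's own derivation in App.~\ref{App: E{Y^3},E{Y^4} Approx} essentially verbatim: a gamma fit to $Y$ by the method of moments, using $\Exp{Y}=\tfrac{N\sqrt{\pi}}{2}L_{\mathrm{ru}}$ and $\Exp{Y^2}=N+F_{R}$ already obtained in App.~\ref{AppA: E{SNR} Corr Deri}, followed by the standard gamma moment identity $\Exp{Z^k}=b^k a(a+1)\cdots(a+k-1)$ for $k=3,4$. Your (correct) computation gives $\Var{Y}=N+F_{R}-\tfrac{N^{2}\pi}{4}L_{\mathrm{ru}}^{2}$, which is consistent with the stated $a$; the factor $\tfrac{N^{2}\pi^{2}}{4}L_{\mathrm{ru}}^{2}$ appearing inside the stated $b$ (and in the appendix's variance expression) is evidently a typo for $\tfrac{N^{2}\pi}{4}L_{\mathrm{ru}}^{2}$.
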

\begin{proof}See App.~\ref{App: E{Y^3},E{Y^4} Approx} for the derivation of \eq{\ref{Eq: Corollary Ricean}}. \end{proof}

Note that for high values of correlation and the K-factor, \eq{\ref{Eq: Whittaker M}} is computationally expensive as in such scenarios the number of terms required in the double summation over $m$ and $n$ grows large. In these situations, an alternative approach is to replace the double summation with its integral equivalent, derived in App.~\ref{App: Integral form of E{ri rj}}, and presented in the following cases where $\rhoruij{}<1$  and  $\rhoruij{} = 1$. The case of perfect correlation,  $\rhoruij{} = 1$, provides a useful benchmark to evaluate the SNR trends as the level of correlation increases.

\subsection{Case 1: $F_R$ for $\rhoruij{}<1$}\label{SubSec: FR for high kru and rhoru}
\begin{align}\label{Eq: FR for high kru and rhoru}
F_{R} &= 
\underset{i \neq k}{\sum_{i=1}^{N} \sum_{k=1}^{N}} \frac{\zetaru^2 \sqrt{1 - \rhoruij{2}}}{2\sqrt{\pi}} \int_{0}^{2\pi}\int_{0}^{\infty} r e^{-r^2} \notag \\
& \times \sqrt{\Abs{a_i}{2} + r^2 + 2r\Abs{a_i}{}\cos(\theta - \theta_i)} \notag \\
& \times \LagP{-b(\Abs{a_k}{2} + r^2 + 2r\Abs{a_k}{}\cos(\theta - \theta_k))} \ d\theta dr,
\end{align}
where $a_i = \etaru \myVMIndex{a}{ru}{i}{}/\zetaru$, $a_k = \etaru\myVMIndex{a}{ru}{k}{}/(\zetaru\rhoruij{})$, $\theta_i = \angle \myVMIndex{a}{ru}{i}{}$ and $\theta_k = \angle \myVMIndex{a}{ru}{k}{}$.

\subsection{Benchmark Case: $F_R$ when $\rhoruij{} = 1$}\label{SubSec: FR for rhoru = 1}
For the benchmark case of maximum correlation, $\rhoruij{} = 1$, both \eq{\ref{Eq: Whittaker M}} and \eq{\ref{Eq: FR for high kru and rhoru}} result in an indeterminate answer. However, we can use the integral form of $F_R$ and find its result as $\rhoruij{} \rightarrow 1$ which, using the derivation in App.~\ref{App: Integral form of E{ri rj}}, gives 
\begin{align}\label{Eq: FR for rhoru = 1}
F_{R} &= 
\frac{\zetaru^2}{\pi} \underset{i \neq k}{\sum_{i=1}^{N} \sum_{k=1}^{N}} \int_{0}^{2\pi}\int_{0}^{\infty} r e^{-r^2} \notag \\
& \hspace{1em} \times \sqrt{\Abs{a_i}{2} + r^2 + 2r\Abs{a_i}{}\cos(\theta - \theta_i)} \notag \\
& \hspace{1em} \times \sqrt{\Abs{a_i}{2} + r^2 + 2r\Abs{a_i}{}\cos(\theta - \theta_k)} \ d\theta dr,
\end{align}
where $a_i, \theta_i$ and $\theta_k$ are defined in Sec.~\ref{SubSec: FR for high kru and rhoru}. Note that the double numeric integrations required in \eq{\ref{Eq: FR for high kru and rhoru}} and \eq{\ref{Eq: FR for rhoru = 1}} are computationally convenient as: one integral is finite range and the integrand is smooth and rapidly decaying as $r \rightarrow \infty$.

\subsection{Special Case 1: Uncorrelated Ricean Case}\label{SubSec: Special Case: UnCorr Ricean}
For independent Ricean fading, we provide exact closed form expressions for both $\Exp{\mathrm{SNR}}$ and $\Var{\mathrm{SNR}}$. 
The mean SNR is given by
\begin{align}\label{Eq: Uncorr E{SNR} Ricean}
	& \Exp{\mathrm{SNR}} = \notag \\
	& \Bigg( \frac{N\sqrt{M}\zetad\zetaru\pi\sqrt{\betad\betabr\betaru}}{2}
	\LagP{-\kru}\LagP{-\frac{\kd \Abs{\myVM{a}{b}{H}\ad{}}{2}}{M}} \notag \\
	& \quad + \beta_{\mathrm{d}}M + \beta_{\mathrm{br}}\beta_{\mathrm{ru}}M(N+F_{\mathrm{R}}) \Bigg) \bar{\tau},
\end{align}
with $F_{R}$ given by
\begin{align}\label{Eq: FR uncorr Ricean}
	F_{R} = \frac{\pi N(N-1)}{4} \left(\zetaru \LagP{-\kru}\right)^2.
\end{align}
This can be easily derived from Theorem 1 by noting that $F_R$ is the sum of products of iid Ricean random variables over all $1 \leq i,k \leq N$ for $i \neq k$ and $A = \sqrt{M}$ since $\myVM{R}{d}{} = \myVM{I}{M}{}$ and $\myVM{R}{ru}{} = \myVM{I}{N}{}$.

The SNR Variance is given by
\begingroup
\allowdisplaybreaks
	\begin{align}\label{Eq: Corr Var{SNR} Ricean_uncorr}
		& \Var{\mathrm{SNR}} = \notag \\
		& \Bigg(\betad^{2}\zetad^{2}\left[ 2\etad^{2}M + \zetad^{2}M\right] \notag \\
		& + \betad^{3/2}\sqrt{\betabr\betaru} \sqrt{M} N L_{\mathrm{ru}} \pi \left[\frac{2}{\sqrt{\pi}}(T_{21}+T_{22}+T_{23}) - M L_{\mathrm{d}}\right] \notag \\
		& + \betad\betabr\betaru\Bigg[ 4\left( \etad^{2}\Abs{\myVM{a}{b}{H}\ad{}}{2} + \zetad^{2}M \right)(N+F_{\mathrm{R}}) \notag \\
		& \qquad -\frac{N^2 M \pi^2 L_{\mathrm{ru}}^{2} L_{\mathrm{d}}^{2} }{4}\Bigg] \notag \\
		& + \sqrt{\betad}(\betabr\betaru)^{3/2} M^{3/2} L_{\mathrm{d}}\left[ 2\sqrt{\pi}C_{1\mathrm{u}} - N(N+F_{\mathrm{R}})\pi L_{\mathrm{ru}} \right] \notag \\
		& + (M\betabr\betaru)^{2}\left[ C_{2\mathrm{u}} - (N + F_{R})^{2} \right] \Bigg) \bar{\tau}^{2},
	\end{align}
with
\begin{align}
	& L_{\mathrm{ru}} = \zetaru \LagP{-\kru} \label{Eq: Lru2} \\
	& L_{\mathrm{d}} = \zetad \LagP{-\kd \frac{\Abs{\myVM{a}{b}{H}\ad{}}{2}}{M}} \\
	& T_{21} = \frac{3B\zetad^{3}\sqrt{\pi}}{4} L_{3/2}\left(-C\right)  
	- \frac{2B\etad}{\sqrt{M}}\Re\{\myVM{a}{b}{H}\ad{} \mathcal{I} \} \notag \\
	& \quad + \frac{L_{\mathrm{d}}}{2} \left[ \frac{B\kd\Abs{\ad{H}\myVM{a}{b}{}}{2}}{M (1+\kd)} + \frac{\sqrt{\pi}(M-B)}{1+\kd}\right] 
\end{align}
\begin{align}
	& T_{22} = \frac{2\etad}{\sqrt{M}}
	\Re\left\{
	\ad{H}\myVM{R}{d}{}\myVM{a}{b}{} \left[ \mathcal{I} - \frac{\etad\ad{H}\myVM{a}{b}{}\sqrt{\pi}}{2\sqrt{M}} L_{\mathrm{d}} \right] \right\} \\
	& T_{23} = \frac{\etad^{2} M \sqrt{\pi}}{2} L_{\mathrm{d}} \\
	& \mathcal{I} = 
	\frac{-3 \sqrt{M} \sqrt{\pi}}{4\sqrt{\kd}(1+\kd)} 
	\exp\left\{ j\angle \ad{H}\myVM{a}{b}{} - \frac{\kd\Abs{\ad{H}\myVM{a}{b}{}}{2}}{2 M} \right\} \notag \\
	& \quad \times M_{3/2,1/2}\left( -\frac{\kd\Abs{\ad{H}\myVM{a}{b}{}}{2}}{M} \right) \label{Eq: Fancy I2} \\
	& C_{\mathrm{1u}} = \frac{3 \sqrt{\pi} N}{4}L_{\mathrm{ru1}} + \frac{\pi^{3/2} N!}{8(N-3)!} L_{\mathrm{ru}}^3 + \frac{3 \sqrt{\pi} N!}{2(N-2)!} L_{\mathrm{ru}} \label{Eq: C1u}\\
	&C_{\mathrm{2u}} = N(2\zetaru^2 + \etaru^2(4\zetaru + \etaru^2)) + \frac{\pi^2 N!}{16(N-4)!} L_{\mathrm{ru}}^4 \notag \\
	& \quad + \frac{3 N!}{(N-2)!}\left(1 + \sqrt{\pi}L_{\mathrm{ru1}}\right) + \frac{3\pi N!}{2(N-3)!}L_{\mathrm{ru}}^2. \label{Eq: C2u}
\end{align}
\endgroup
Equations \eqref{Eq: Corr Var{SNR} Ricean_uncorr}-\eqref{Eq: C2u} follow from Theorem 2 using the substitutions $A=\sqrt{M}$, $\Norm{\myVM{a}{d}{}}{2}{2} = M$, $\tr{\myVM{R}{d}{2}}{}{} = M$ and the version of $F_R$ given in \eqref{Eq: FR uncorr Ricean}.
A consequence of having uncorrelated UE-BS and UE-RIS channels, is that the expectations $\Exp{Y^{3}}$ and $\Exp{Y^{4}}$ are known exactly and these are given by \eq{\ref{Eq: C1u}} and \eq{\ref{Eq: C2u}} respectively.

\subsection{Special Case 2: Correlated Rayleigh Case}\label{SubSec: Special Case: Corr Rayleigh}
We obtain mean SNR and SNR variance expressions for correlated Rayleigh channels $\myVM{h}{d}{}$ and $\myVM{h}{ru}{}$ by setting $\kd = \kru = 0$.
\subsubsection{Mean SNR}\label{SubSubSec: Mean SNR Corr Rayleigh}
Note that when $\kru=0$, $\zetaru=1$ and $\LagP{-\kru}=1$. Similarly when $\kd=0$, $\zetad=1$ and $\LagP{-\frac{\kd \Abs{\myVM{a}{b}{H}\ad{}}{2}}{A^{2}}}=1$. As such, the second term in \eq{\ref{Eq: Corr E{SNR} Ricean}} reduces down to the second term in \cite[Eq.~(8)]{ISinghRayleigh}. The third term can be reduced to its correlated Rayleigh form as shown in \cite[Eq.~(8)]{ISinghRayleigh} by reducing $F_{R}$ to \eq{\ref{Eq: HyperGeometric}} (see App.~\ref{App: Reduce FR to Corr Rayleigh}). The final form after reducing all of the terms in \eq{\ref{Eq: Corr E{SNR} Ricean}} results in the mean SNR expression in \cite[Eq.~(8)]{ISinghRayleigh}; given below for completeness:
\begin{theorem}
	The mean SNR when $\myVM{h}{d}{}$ and $\myVM{h}{ru}{}$ are correlated Rayleigh channels is given by
	\begin{align}\label{Eq: Corr E{SNR}}
	& \Exp{\mathrm{SNR}} = \notag \\
	& \left( 
	\hspace{-0.3em}\beta_{\mathrm{d}}M  +
	\dfrac{NA\pi\sqrt{\beta_{\mathrm{d}}\beta_{\mathrm{br}}\beta_{\mathrm{ru}}} }{2}  
	+ \beta_{\mathrm{br}}\beta_{\mathrm{ru}}M(N + F) 
	\right)\bar{\tau},
	\end{align}
	with $F$ given by
	\begin{equation}\label{Eq: HyperGeometric}
	F = \underset{i \neq k}{\sum_{i=1}^{N} \sum_{k=1}^{N}} \dfrac{\pi}{4}\left( 1 - \left\lvert\rho_{ik}\right\rvert^2\right)^2 {}_{2}F_{1}\left(\frac{3}{2},\frac{3}{2};1;\left\lvert\rho_{ik}\right\rvert^2 \right),
	\end{equation}
	and $A = \left\lVert \myVM{R}{d}{1/2} \myVM{a}{b}{} \right\rVert _{2}$, where ${}_{2}F_{1}(\cdot)$ is the Gaussian hypergeometric function and $\rho_{ij} = \left(\myVM{R}{ru}{} \right)_{ij}$. 
\end{theorem}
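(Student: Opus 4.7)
The plan is to derive Theorem~3 as a limiting case of Theorem~1, setting $\kd = \kru = 0$ and reducing every Ricean-specific factor in \eq{\ref{Eq: Corr E{SNR} Ricean}} to its Rayleigh counterpart. First I would deal with the Laguerre factors: since $L_{1/2}(0) = 1$, both $\LagP{-\kru}$ and $\LagP{-\kd \Abs{\myVM{a}{b}{H}\ad{}}{2}/A^{2}}$ collapse to unity. Simultaneously, $\etad = \etaru = 0$ and $\zetad = \zetaru = 1$. Substituting these into \eq{\ref{Eq: Corr E{SNR} Ricean}}, the leading ``LOS cross-term'' simplifies to $\tfrac{NA\pi\sqrt{\betad\betabr\betaru}}{2}$, matching the middle term of \eq{\ref{Eq: Corr E{SNR}}}, while the $\betad M$ and $\betabr\betaru M (N + F_R)$ terms remain intact in form; all that is left is to show that $F_R$ from \eq{\ref{Eq: Whittaker M}} reduces to $F$ in \eq{\ref{Eq: HyperGeometric}}.

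Next I would attack $F_R$ term-by-term under $\kru \to 0$. The exterior exponential becomes $\exp\{0\} = 1$ trivially. Inside the double series, the factor $\bigl(\tfrac{\kru(1 + \Abs{\rho_{ik}}{2} - 2\mu_c)}{1 - \Abs{\rho_{ik}}{2}}\bigr)^n$ vanishes for every $n \geq 1$, so only the $n=0$ term of the inner sum survives; then $\epsilon_0 = 1$, $\cos(0) = 1$, and the confluent hypergeometric function satisfies ${}_1F_1(m + 3/2, 1, 0) = 1$. Therefore
\begin{equation*}
F_R \bigr|_{\kru=0} = \underset{i \neq k}{\sum_{i=1}^{N} \sum_{k=1}^{N}} \bigl(1 - \Abs{\rho_{ik}}{2}\bigr)^2 \sum_{m=0}^{\infty} \frac{\Gamma^2(m + 3/2)}{(m!)^2} \Abs{\rho_{ik}}{2m}.
\end{equation*}

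The final step is to recognise the remaining series. Using $\Gamma(m + 3/2) = (3/2)_m\,\Gamma(3/2)$ with $\Gamma(3/2) = \sqrt{\pi}/2$, one obtains $\Gamma^2(m+3/2)/(m!)^2 = \tfrac{\pi}{4}\,(3/2)_m^2/((1)_m m!)$ (since $(1)_m = m!$), which is exactly the generic coefficient of ${}_2F_1(3/2, 3/2; 1; \Abs{\rho_{ik}}{2})$. The inner sum therefore equals $\tfrac{\pi}{4}\,{}_2F_1(3/2, 3/2; 1; \Abs{\rho_{ik}}{2})$, and substituting back reproduces \eq{\ref{Eq: HyperGeometric}} exactly. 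Reassembling all simplified pieces gives \eq{\ref{Eq: Corr E{SNR}}}, completing the reduction.

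I expect the main obstacle to be the bookkeeping in the Laguerre-to-$\,{}_2F_1$ collapse of $F_R$: verifying that only the $n=0$ slice of the double series survives, and then matching the Pochhammer/Gamma identities so that the residual sum is cleanly identified as a Gauss hypergeometric function rather than some unsimplified infinite series. Once that identification is made, the rest is direct substitution of the Rayleigh parameter values.
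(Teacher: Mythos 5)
Your reduction is correct, and at the top level it mirrors the paper: Theorem 3 is obtained from Theorem 1 by setting $\kd=\kru=0$, so that $\zetad=\zetaru=1$ and both Laguerre factors equal $L_{1/2}(0)=1$, leaving only the identification $F_R\!\mid_{\kru=0}=F$ to establish. Where you genuinely diverge is in that last step. The paper (App.~\ref{App: Reduce FR to Corr Rayleigh}) does not touch the double series in \eq{\ref{Eq: Whittaker M}} at all; it returns to the integral form of $\Exp{r_i r_k}$ from \cite[Eq.~(42)]{Yacoub}, sets $\kru=0$ so the bivariate Ricean density collapses to a bivariate Rayleigh one, expands $I_0(\cdot)$ as a series, evaluates the resulting Gaussian-type integrals via \cite[Eq.~(3.461.2)]{GradRyz}, and only then recognises the ${}_2F_1$ series. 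You instead evaluate \eq{\ref{Eq: Whittaker M}} directly at $\kru=0$: the exterior exponential is $1$, the factor $\bigl(\kru(1+\Abs{\rho_{ik}}{2}-2\mu_c)/(1-\Abs{\rho_{ik}}{2})\bigr)^n$ kills every $n\ge 1$ slice, ${}_1F_1(m+3/2,1,0)=1$, and the surviving single sum is matched to ${}_2F_1(3/2,3/2;1;\Abs{\rho_{ik}}{2})$ via $\Gamma(m+3/2)=(3/2)_m\sqrt{\pi}/2$ and $(1)_m=m!$ --- all of which checks out. Your route is shorter and self-contained given Theorem 1, since it is a pointwise evaluation at $\kru=0$ (no limit interchange is needed: the $n\ge1$ terms are identically zero there); the paper's route buys independence from the series representation and doubles as a sanity check that \eq{\ref{Eq: Whittaker M}} is consistent with the known bivariate Rayleigh product moment. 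Both are valid proofs of the statement.
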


\subsubsection{SNR Variance}\label{SubSubSec: Var SNR Corr Rayleigh}
In addition to the steps undertaken in \ref{SubSubSec: Mean SNR Corr Rayleigh}, we further note that $\etad = \etaru = 0$. This eliminates many of the terms in \eq{\ref{Eq: Corr Var{SNR} Ricean}} of Theorem 2. Using App.~\ref{App: Reduce FR to Corr Rayleigh} to collapse $F_{R}$ down to its correlated Rayleigh form, we obtain the final SNR variance expression for correlated Rayleigh channels, consistent with that given in \cite[Eq.~(10)]{ISinghRayleigh} and given below for completeness:
\begin{theorem}
	The SNR variance when $\myVM{h}{d}{}$ and $\myVM{h}{ru}{}$ are correlated Rayleigh channels is given by
	\begin{align}\label{Eq: Corr var{SNR}}
	\Var{\mathrm{SNR}}&=\Big(\beta_{\mathrm{d}}^{2} \tr{R}{d}{2} 
	+ \beta_{\mathrm{d}}^{3/2}\sqrt{\beta_{\mathrm{br}}\beta_{\mathrm{ru}}}N\pi(B - MA) \notag \\
	& + \beta_{\mathrm{d}}\beta_{\mathrm{br}}\beta_{\mathrm{ru}} A^{2} \left(4(N+F) - \dfrac{N^2\pi^2}{4} \right)\notag \\
	& + M A \sqrt{\beta_{\mathrm{d}}}\left(\beta_{\mathrm{br}}\beta_{\mathrm{ru}}\right)^{3/2} \left( 2\sqrt{\pi}C_{1} - N(N+F)\pi\right) 	\notag \\
	& + (M\beta_{\mathrm{br}}\beta_{\mathrm{ru}})^2 \left(C_{2} - (N+F)^2\right) \Big) \bar{\tau}^{2} ,
	\end{align}
	where $ B = MA +{\myVM{a}{b}{H} \myVM{R}{d}{2} \myVM{a}{b}{}}/{2A} $, $F$ is given by (\ref{Eq: HyperGeometric}), $A$ is given in Theorem 1 and $C_{1}=\Exp{Y^{3}},C_{2}=\Exp{Y^{4}}$. 
\end{theorem}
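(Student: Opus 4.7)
The plan is to specialize Theorem~2 to the Rayleigh regime by substituting $\kd=\kru=0$ and showing that the five bracketed terms of \eqref{Eq: Corr Var{SNR} Ricean} collapse to the five terms in \eqref{Eq: Corr var{SNR}}. With $\kd=\kru=0$, one has $\etad=\etaru=0$, $\zetad=\zetaru=1$, and using $\LagP{0}=1$ (so that $L_{\mathrm{ru}}=L_{\mathrm{d}}=1$) and $L_{3/2}(0)=1$, many of the more involved quantities in Theorem~2 simplify. The first Ricean term $\betad^{2}\zetad^{2}[2\etad^{2}\Norm{\myVM{R}{d}{1/2}\myVM{a}{d}{}}{2}{2}+\zetad^{2}\tr{\myVM{R}{d}{2}}{}{}]$ immediately reduces to $\betad^{2}\tr{\myVM{R}{d}{2}}{}{}$, matching the first term of \eqref{Eq: Corr var{SNR}}.

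The key step is handling the second Ricean term $\betad^{3/2}\sqrt{\betabr\betaru}\,A N L_{\mathrm{ru}}\pi\bigl[\tfrac{2}{\sqrt{\pi}}(T_{21}+T_{22}+T_{23})-ML_{\mathrm{d}}\bigr]$. Because $T_{22}$ and $T_{23}$ each carry a factor of $\etad$ or $\etad^{2}$, both vanish. In $T_{21}$ the $\mathcal{I}$ contribution and the $\kd$-weighted bracket term also vanish, and using $L_{3/2}(0)=1$ one obtains $T_{21}=\sqrt{\pi}(B'+2M)/4$, where $B'=\Norm{\myVM{R}{d}{}\myVM{a}{b}{}}{2}{2}/A^{2}$ is the Theorem~2 constant. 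Substituting gives $\tfrac{2}{\sqrt{\pi}}(T_{21}+T_{22}+T_{23})-ML_{\mathrm{d}}=B'/2$, so the whole term reduces to $\betad^{3/2}\sqrt{\betabr\betaru}\,N\pi\,(AB'/2)$. Since $AB'/2=\myVM{a}{b}{H}\myVM{R}{d}{2}\myVM{a}{b}{}/(2A)$, this equals $\betad^{3/2}\sqrt{\betabr\betaru}\,N\pi\,(B-MA)$ under the Theorem~4 convention $B=MA+\myVM{a}{b}{H}\myVM{R}{d}{2}\myVM{a}{b}{}/(2A)$, which is precisely the second term of \eqref{Eq: Corr var{SNR}}. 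This re-parameterization from $B'$ to $B$ is the main bookkeeping step.

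The remaining three terms are then routine: with $\etad=0$, $\zetad=L_{\mathrm{d}}=L_{\mathrm{ru}}=1$, the third Ricean term becomes $\betad\betabr\betaru\,A^{2}[4(N+F_{R})-N^{2}\pi^{2}/4]$, the fourth becomes $\sqrt{\betad}(\betabr\betaru)^{3/2}MA[2\sqrt{\pi}C_{1}-N(N+F_{R})\pi]$, and the fifth, $(M\betabr\betaru)^{2}[C_{2}-(N+F_{R})^{2}]$, is unchanged in form. It then remains to convert $F_{R}$ in \eqref{Eq: Whittaker M} to the hypergeometric $F$ in \eqref{Eq: HyperGeometric}; this is the content of the cited appendix (App.~\ref{App: Reduce FR to Corr Rayleigh}), in which the $\kru=0$ specialization collapses the exponentials and confluent hypergeometric factors so that the infinite double series closes into ${}_{2}F_{1}(3/2,3/2;1;\Abs{\rho_{ik}}{2})$ after recognising the Gauss hypergeometric series. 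Assembling the five reduced terms and absorbing the $\bar{\tau}^{2}$ factor yields \eqref{Eq: Corr var{SNR}}.

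The main obstacle is less conceptual than organizational: the $T_{21}$ simplification together with the $B'\to B$ re-parameterization is the only place where the Ricean-to-Rayleigh collapse is not immediate, and one must be careful to absorb the factor of $A$ into the definition of $B$ so that the second variance term takes the compact shape $N\pi(B-MA)$. Everything else follows from setting the Rice-specific quantities ($\etad,\etaru,\kd,\kru$) to zero and invoking the $F_{R}\to F$ identity proved in the appendix.
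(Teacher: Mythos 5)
Your proposal is correct and follows exactly the paper's route: the paper likewise obtains Theorem~4 by setting $\kd=\kru=0$ (hence $\etad=\etaru=0$, $\zetad=\zetaru=1$, all Laguerre factors equal to $1$) in Theorem~2 and collapsing $F_R$ to $F$ via App.~E. Your explicit verification that $T_{21}$ reduces to $\sqrt{\pi}(B'+2M)/4$ and that the $AB'/2$ term matches $B-MA$ under the Theorem~4 convention is the only nontrivial bookkeeping, and you have it right; the paper simply asserts this reduction without showing it.
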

\begin{corollary}
	To obtain an SNR variance approximation in closed form, we approximate the 3$^{\text{rd}}$, 4$^{\text{th}}$ moments of $Y$ by,
	\begin{align}\label{Eq: Corollary Rayleigh}
	\hspace{ -5pt}\Exp{Y^3}  & =  b^3 a \prod_{k=1}^{2}(k+a), \quad
	\Exp{Y^4}  =  b^4 a \prod_{k=1}^{3}(k+a)
	\end{align}
	with 
	\begin{equation*}
	a = \dfrac{N^{2}\pi}{4(N+F) - N^{2}\pi} \quad,\quad
	b = \dfrac{2}{N\sqrt{\pi}}\left( N + F - \dfrac{N^{2}\pi}{4} \right).
	\end{equation*}
	where $F$ is given by \eq{\ref{Eq: HyperGeometric}}.
\end{corollary}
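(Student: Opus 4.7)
The plan is to reuse, essentially verbatim, the moment-matching strategy employed for the Ricean Corollary: approximate $Y=\sum_{n=1}^{N}\lvert\tilde{h}_{\mathrm{ru},n}\rvert$ by a two-parameter $\mathrm{Gamma}(a,b)$ random variable chosen so that its first and second moments agree exactly with those of $Y$, and then read off the third and fourth moments from the standard gamma formula $\Exp{Y^{k}}=b^{k}\,a(a+1)\cdots(a+k-1)$. The Rayleigh assumption $\kru=0$ only simplifies the two target moments (not the structural form of the answer), so the displayed expressions for $a$ and $b$ will follow by direct substitution into the general gamma moment-matching equations.

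First I would compute the exact first and second moments of $Y$ under correlated Rayleigh fading. Because $\kru=0$ makes each $\lvert\tilde{h}_{\mathrm{ru},n}\rvert$ a unit-variance Rayleigh variable, $\Exp{\lvert\tilde{h}_{\mathrm{ru},n}\rvert}=\sqrt{\pi}/2$ and therefore $\Exp{Y}=N\sqrt{\pi}/2$. For the second moment I would expand $Y^{2}$ and split diagonal from off-diagonal terms, obtaining $\Exp{Y^{2}}=N+F$, where the diagonal contributes $N$ (unit variance) and the off-diagonal sum is precisely the quantity $F$ in \eqref{Eq: HyperGeometric}; this is the same identification used in the proof of Theorem~3 and reduces the bivariate Ricean product moment via App.~\ref{App: Reduce FR to Corr Rayleigh}.

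Next I would solve the two moment-matching equations $ab=N\sqrt{\pi}/2$ and $a(a+1)b^{2}=N+F$. Squaring the first and dividing by the second gives $a/(a+1)=N^{2}\pi/[4(N+F)]$, which rearranges to $a=N^{2}\pi/[4(N+F)-N^{2}\pi]$, matching the stated expression. Substituting back into $ab=N\sqrt{\pi}/2$ yields $b=(2/(N\sqrt{\pi}))(N+F-N^{2}\pi/4)$, again matching the statement. The two claimed formulas in \eqref{Eq: Corollary Rayleigh} then follow immediately by plugging this $(a,b)$ into the gamma moment formula at $k=3$ and $k=4$.

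The only mildly delicate step is the identification $\Exp{Y^{2}}=N+F$, since one has to be sure that the off-diagonal sum of Rayleigh product moments really collapses to the hypergeometric closed form in \eqref{Eq: HyperGeometric}. This is handled by specialising the general bivariate Ricean product moment used for $F_{R}$ to the $\kru=0$ case, exactly as in App.~\ref{App: Reduce FR to Corr Rayleigh}; once that reduction is invoked, the remainder of the corollary is a short algebraic calculation and the gamma approximation itself is the same structural heuristic justified (and validated numerically) for the Ricean Corollary above.
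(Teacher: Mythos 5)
Your proposal is correct and follows essentially the same route as the paper: App.~\ref{App: E{Y^3},E{Y^4} Approx} fits a gamma distribution to $Y$ by matching $\Exp{Y}$ and $\Exp{Y^2}$ (equivalently $a=\Exp{Y}^2/\Var{Y}$, $b=\Var{Y}/\Exp{Y}$, which is the same system you solve) and then reads off the third and fourth gamma moments, with the Rayleigh specialisation $\Exp{Y}=N\sqrt{\pi}/2$, $\Exp{Y^2}=N+F$ obtained exactly as you describe via the reduction of $F_R$ to $F$ in App.~\ref{App: Reduce FR to Corr Rayleigh}. Your algebra for $a$ and $b$ checks out against the stated expressions.
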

\section{Performance Insights based on $\Exp{\text{SNR}}$}\label{Sec: Perf. Insights based on E{SNR}}

Note that the first term in the mean SNR in \eq{\ref{Eq: Corr E{SNR} Ricean}} is influenced by variants of the function $q(\alpha,x) \triangleq \LagP{-\alpha x}/\sqrt{1 + x}$ for $x \in \{\kru,\kd\}$, $\alpha \geq 0$,  where the factor of $1/\sqrt{1 + x}$ arises  from the $\zetaru$ and $\zetad$ terms. Hence, we study the behaviour of $q(\alpha,x)$ before developing performance insights based on the mean SNR.
\subsection{Behaviour of $\LagP{-\alpha x}/\sqrt{1 + x}$}\label{SubSec: Behaviour of Laguerre Function}
Using \cite[Eq. (13.6.9)]{Stegun} the Laguerre function can be rewritten in terms of the confluent hypergeometric function, ${}_{1}F_{1}$, as
\begin{equation}\label{Eq: Confluent Hyper. representation of Laguerre fuction}
	\LagP{-\alpha x} =
	{}_{1}F_{1}\left(-\frac{1}{2};1;\alpha x\right).
\end{equation}
Using the series expansion for ${}_{1}F_{1}$ \cite[Eq. (13.1.2)]{Stegun} and the Maclaurin series expansion of $1/\sqrt{1 + x}$, we have the following expansion for $q(\alpha,x)$,
\begin{align*}
	q(\alpha,x) 
	&=
	\left( 1 + \frac{\alpha x}{2} - 
	\frac{\alpha^2 x^2}{16} + \ldots
	\right)
	\left( 1 - \frac{x}{2} + 
	\frac{3x^2}{8} - \ldots
	\right) \\
	&=
	1 + \frac{\alpha-1}{2}x + 
	\frac{6-4\alpha-\alpha^2}{16}x^2 + \ldots
\end{align*}
First, we look at the behaviour near the origin. Noting that $q(\alpha,0) = 1$, then for $\alpha \geq 1$, $q(\alpha,x)$ is increasing in $x$  and for $\alpha<1$, $q(\alpha,x)$ is decreasing in $x$.

Next, we look at the behaviour as $x \to \infty$. Using the asymptotic behaviour of ${}_{1}F_{1}$ \cite[Eq. (13.1.5)]{Stegun}, we have 
\begin{align}
	q(\alpha,x) &\sim
	2\sqrt{\frac{\alpha x}{\pi(1+x)}}\left(1 + \frac{1}{4\alpha x}\right) \label{Eq: Asymptotic value} \\
	&\to 2\sqrt{\frac{\alpha}{\pi}}
	\quad \text{ as } x \rightarrow \infty. \label{Eq: Asymptotic value Lim}
\end{align}
From \eq{\ref{Eq: Asymptotic value}}, we see that as $x\rightarrow\infty$, $q(\alpha,x)\geq q(\alpha,0) =1$ for $\alpha \geq \pi/4$ and $q(\alpha,x)<q(\alpha,0) =1$ for $\alpha < \pi/4$. Differentiating \eqref{Eq: Asymptotic value} and  simplifying gives
\begin{equation}\label{deriv}
	\frac{dq(\alpha,x)}{dx} \approx
	\frac{1}{2\alpha x^{3/2}\sqrt{1+x}}\left( \frac{\alpha x + 1}{1+x} - \frac{1}{2} \right),
\end{equation}
for large values of $x$. From \eqref{deriv}, we make the following observations. As $x\rightarrow\infty$, $\frac{dq(\alpha,x)}{dx}<0$ for $\alpha < 1/2$ and $\frac{dq(\alpha,x)}{dx}>0$ for $\alpha \geq  1/2$. 

Collating these properties, we obtain the following result:
\begin{result} The main  characteristics of $q(\alpha,x)$ are given by
$$
q(\alpha,x) \Longrightarrow
\begin{cases}
\text{increases near the origin}, & \text{for } \alpha \geq 1 \\
\text{decreases near the origin}, & \text{for } \alpha < 1 \\
\text{positive gradient as } x\rightarrow\infty, & \text{for } \alpha \geq 1/2 \\
\text{negative gradient as } x\rightarrow\infty, & \text{for } \alpha < 1/2 \\
\geq 1 \text{ as } x\rightarrow\infty, & \text{for } \alpha \geq \pi/4 \\
< 1 \text{ as } x\rightarrow\infty, & \text{for } \alpha < \pi/4 \\
\end{cases}.
$$
\end{result}

\subsection{Effect of $\kru$ on $\Exp{\text{SNR}}$}\label{SubSec: Effect of kru on E{SNR}}


Only the first and third terms in \eq{\ref{Eq: Corr E{SNR} Ricean}} are affected by $\kru$. The first term contains $q(1,\kru)$ so is increasing with $\kru$ from Result 1. In the third term, $\kru$ affects the mean SNR via the $F_R$ expression. Applying H{\"o}lders inequality \cite[E.q. (1.1)]{2244734} to $F_R$ in \eqref{FR_defn} gives $F_R \le N(N-1)$. This upper bound is achieved when $\kru \rightarrow \infty$. Hence, we conclude that increasing $\kru$ improves the mean SNR.




\subsection{Effect of $\kd$ on $\Exp{\text{SNR}}$}\label{SubSec: Effect of kd on E{SNR}}


Only the first term in \eq{\ref{Eq: Corr E{SNR} Ricean}} is affected by $\kd$ through the expression
\begin{equation*}
	\zetad L_{1/2}\left(-\frac{\kd \Abs{\myVM{a}{b}{H}\ad{}}{2}}{A^2}\right) = q\left(\frac{ \Abs{\myVM{a}{b}{H}\ad{}}{2}}{A^2}, \kd\right).
\end{equation*}
From Result 1, this expression increases with $\kd$ when $\Abs{\myVM{a}{b}{H}\ad{}}{}>A$, decreases when $\Abs{\myVM{a}{b}{H}\ad{}}{}<\sqrt{2A}$ and has a more complex behavior in-between. 

This pattern can be explained by the fact that $\Abs{\myVM{a}{b}{H}\ad{}}{}$ measures the alignment of the RIS-BS channel, $\myVM{a}{b}{}$, with the LOS component of the direct path, $\ad{}$, while $A = \left\lVert \myVM{R}{d}{1/2} \myVM{a}{b}{} \right\rVert _{2}$ measures the alignment of  $\myVM{a}{b}{}$ with the scattered component of the direct path. Hence, when the RIS-BS aligns strongly with the direct LOS path then $\kd$ improves the SNR. Conversely, when the RIS-BS aligns strongly with the direct scattered channel then increasing $\kd$ reduces the SNR.

The dominant efect here is the $\Abs{\myVM{a}{b}{H}\ad{}}{}<\sqrt{2A}$ case, where $\kd$ decreases the mean SNR. This is because it is unlikely for two independent steering vectors to strongly align, whereas $\myVM{a}{b}{}$ can align with $\myVM{R}{d}{}$ in several ways as the correlation matrix is usually full-rank (see \cite{ISinghRayleigh}).

\subsection{Effect of $\rho_{\mathrm{ru}}$ on $\Exp{\text{SNR}}$}\label{SubSec: Effect of rhoru on E{SNR}}
The effect of correlation in the UE-RIS channel, $\rho_{\mathrm{ru}}$, on the mean SNR is confined to the variable $F_R$ in \eq{\ref{Eq: Corr E{SNR} Ricean}}. Identifying the behavior of $F_R$ with respect to $\rho_{\mathrm{ru}}$ is very difficult due to the oscillations in the series expansion given in \eqref{Eq: Whittaker M} caused by the $\cos(n\theta)$ term. 

 
However, when $\kru=0$ we have the correlated Rayleigh case and $F_R$ becomes $F$ in \eq{\ref{Eq: HyperGeometric}} which increases with correlation  (see \cite{ISinghRayleigh}). Hence, we conjecture the same broad trend here, so that the mean SNR usually benefits from increasing correlation in $\myVM{h}{ru}{}$.

It is worth noting that the majority of the existing literature does not have the complication of the $\cos(n\theta)$ term in the bivarate Ricean PDF. This is due to the assumption that both of the Ricean variables have identical LOS components (see \cite{Zogas} for example). However, this simplifying assumption is not valid here and the more complex version in \cite{Yacoub} is required.

\subsection{Effect of $\rho_{\mathrm{d}}$ on $\Exp{\text{SNR}}$}\label{SubSec: Effect of rhod on E{SNR}}
Only variable $A$ in the first term of \eq{\ref{Eq: Corr E{SNR} Ricean}} is affected by $\rho_{d}$. It is shown in \cite[Sec.~IV.B]{ISinghRayleigh} that as $\rho_{\text{d}} \rightarrow 1$, we can usually expect $A \leq \sqrt{M}$ for large $M$. In contrast, when $\rho_{\mathrm{d}} = 0$, we have $A=\sqrt{M}$. Hence, the mean SNR benefits from low correlation in the UE-BS link.

\subsection{Favourable and Unfavourable Conditions}\label{SubSec: Max, Min, Fav E{SNR}}

From Sec.~\ref{SubSec: Effect of kru on E{SNR}} - Sec.~\ref{SubSec: Effect of rhod on E{SNR}}, low correlation and low K-factor in the UE-BS link along with a high K-factor in the UE-RIS link tend to improve $\Exp{\text{SNR}}$. Hence, we define the \textit{favourable channel scenario} as an iid Rayleigh channel between UE and BS and pure LOS between UE and RIS. Conversely, the \textit{unfavourable channel scenario} comprises an iid Rayleigh UE-RIS channel and a LOS UE-BS channel. 


Using the analysis in Sec.~\ref{Sec: Perf. Insights based on E{SNR}}, the mean SNR in the favourable channel scenario is given by,
\begin{align}\label{Eq: Fav. E{SNR}}
	& \Exp{\mathrm{SNR}_{\mathrm{fav}}} \notag \\
	& = \left( \betad M + N\sqrt{M\pi}\sqrt{\betad\betabr\betaru} + \betabr\betaru M N^{2}\right) \bar{\tau},
\end{align}
while the mean SNR in the unfavourable channel scenario is,
\begin{align}\label{Eq: UnFav. E{SNR}}
& \Exp{\mathrm{SNR}_{\mathrm{unfav}}} \notag \\
& = \Bigg( \betad M + N\sqrt{\pi}\Abs{\myVM{a}{b}{H}\myVM{a}{d}{}}{}\sqrt{\betad\betabr\betaru} \notag \\
& \hspace{7em} + \betabr\betaru M\left(N + \frac{\pi N(N-1)}{4}\right)\Bigg) \bar{\tau},
\end{align}
using the result for $F_R$ for uncorrelated Rayleigh fading (see \cite[Eq.~(14)]{ISinghRayleigh}).

Next we consider the relative difference between these two scenarios for large RIS sizes. Defining the gain 
\begin{equation}
	\mathrm{Gain}_{\mathrm{fav-unfav}} 
	= \frac{\Exp{\mathrm{SNR}_{\mathrm{fav}}} - \Exp{\mathrm{SNR}_{\mathrm{unfav}}}}{\Exp{\mathrm{SNR}_{\mathrm{unfav}}}},  \label{Eq: Gain fav-unfav} 
\end{equation}
it is straightforward to show that
\begin{equation*}
\lim\limits_{N\rightarrow\infty} \mathrm{Gain}_{\mathrm{fav-unfav}} = 
	\frac{4-\pi}{\pi} \approx 27.32\%.
\end{equation*}
Hence, for large RIS, the asymptotic relative gain between  the channel scenarios is approximately  27.32\%. 

In order to find the maximum gain, we rewrite \eqref{Eq: Gain fav-unfav}   as
\begin{equation}\label{Eq: Gain fav-unfav Relaxed} 
	\mathrm{Gain}_{\mathrm{fav-unfav}} \triangleq \frac{N^2D_1 + ND_2}{N^2D_3 + ND_4 + D_5}.
\end{equation}

\noindent Differentiating \eqref{Eq: Gain fav-unfav Relaxed} with respect to $N$ is elementary and allows the maximum to be obtained. Details are omitted but results are shown in Sec.~\ref{Sec: Results}.
\section{Results}\label{Sec: Results}
We present numerical results to verify the analysis in Sec.~\ref{Sec: Perf. Insights based on E{SNR}}. Firstly, note that we do not consider cell-wide averaging as the focus is on the SNR distribution over the fast fading. Furthermore, the relationship between the SNR and the path gains, $\beta_{\mathrm{d}},\beta_{\mathrm{br}}$ and $\beta_{\mathrm{ru}}$, is straightforward, as shown in \eq{\ref{Eq: Corr E{SNR} Ricean}}. Hence, we present numerical results for fixed link gains. In particular, since the RIS-BS link is LOS, we assume $\beta_{\mathrm{br}}=d_{\mathrm{br}}^{-2}$ where $d_{\mathrm{br}}=20$m. Next, for simplicity, $\beta_{\mathrm{d}}=\beta_{\mathrm{ru}}=0.69$. This was chosen to give the 95\%-ile of the SNR distribution as 25 dB in the baseline case of moderate channel correlation and identical Ricean K-factors for the LOS and scattered paths (defined as $\rho_{\mathrm{d}}=\rho_{\mathrm{ru}}=0.7$, $\kd=\kru=1$), with $M=32$ BS antennas and $N=64$ RIS elements.

As stated in Sec. \ref{Sec: Channel Model}, the steering vectors for $\myVM{H}{br}{}$ are not restricted to any particular formation. However, for simulation purposes, we will use the VURA model as outlined in \cite{CMiller}, but in the $y-z$ plane with equal spacing in both dimensions at both the RIS and BS. The $y$ and $z$ components of the steering vector at the BS are $\myVM{a}{b,y}{} $ and $\myVM{a}{b,z}{}$ which are given by
\begin{align*}
& [1, e^{j2\pi d_{\mathrm{b}} \sin(\theta_{\mathrm{A}})\sin(\omega_{\mathrm{A}})}, \ldots, e^{j2\pi d_{\mathrm{b}} (M_{y}-1) \sin(\theta_{\mathrm{A}})\sin(\omega_{\mathrm{A}})}]^{T} \\
& \text{and } [1, e^{j2\pi d_{\mathrm{b}} \cos(\theta_{\mathrm{A}})}, \ldots, e^{j2\pi d_{\mathrm{b}}(M_{z}-1) \cos(\theta_{\mathrm{A}})}]^{T},
\end{align*}
respectively. Similarly at the RIS, $\myVM{a}{r,y}{}$ and $\myVM{a}{r,z}{}$ are defined by,
\begin{align*}
& [1, e^{j2\pi d_{\mathrm{r}} \sin(\theta_{\mathrm{D}})\sin(\omega_{\mathrm{D}})}, \ldots, e^{j2\pi d_{\mathrm{r}}(N_{y}-1) \sin(\theta_{\mathrm{D}})\sin(\omega_{\mathrm{D}})}]^{T} \\
& \text{and } [1, e^{j2\pi d_{\mathrm{r}} \cos(\theta_{\mathrm{D}})}, \ldots, e^{j2\pi d_{\mathrm{r}} (N_{z}-1) \cos(\theta_{\mathrm{D}})}]^{T},
\end{align*}
respectively, where $M = M_{y}M_{z}$, $N = N_{y}N_{z}$ with $M_{y}, M_{z}$ being the number of antenna columns and rows at the BS and $N_{y},N_{z}$ being the number of columns and rows of RIS elements, $d_{\mathrm{b}}=0.5$, $d_{\mathrm{r}}=0.2$, where $d_{\mathrm{b}}$ and $d_{\mathrm{r}}$ are in wavelength units. The steering vectors at the BS and RIS are then given by,
\begin{equation}
\myVM{a}{b}{} = \myVM{a}{b,y}{} \otimes \myVM{a}{b,z}{}
\quad,\quad
\myVM{a}{r}{} = \myVM{a}{r,y}{} \otimes \myVM{a}{r,z}{},
\end{equation} 
respectively. $\theta_{\mathrm{A}}$ and $\omega_{\mathrm{A}}$ are elevation/azimuth angles of arrival (AOAs) at the BS and $\theta_{\mathrm{D}},\omega_{\mathrm{D}}$ are the corresponding angles of departure (AODs) at the RIS. The elevation/azimuth angles are selected based on the following geometry representing a range of LOS $\myVM{H}{br}{}$ links with less elevation variation than azimuth variation: 
$	
\theta_{D} \sim \mathcal{U}[70^{o},90^{o}], \hspace{1em}
\omega_{D} \sim \mathcal{U}[-30^{o},30^{o}], \hspace{1em}  
\theta_{A} = 180^{o} - \theta_{D}, \hspace{1em} 
\omega_{A} \sim \mathcal{U}[-30^{o},30^{o}] 
$. For all results in this paper we use a single sample from this range of angles given by $\theta_{D}=77.1^{o}, \omega_{D}=19.95^{o}, \theta_{A}=109.9^{o}, \omega_{A}=-29.9^{o}$.

For the LOS components in the UE-BS and UE-RIS channels, the $y$ and $z$ components of the steering vector at the BS are $\myVM{a}{d,y}{}$ and $\myVM{a}{d,z}{}$ which are given by
\begin{align*}
& [1, e^{j2\pi d_{\mathrm{b}} \sin(\theta_{\mathrm{A,d}})\sin(\omega_{\mathrm{A,d}})}, \ldots, e^{j2\pi d_{\mathrm{b}} (M_{y}-1) \sin(\theta_{\mathrm{A,d}})\sin(\omega_{\mathrm{A,d}})}]^{T} \\
& \text{and }[1, e^{j2\pi d_{\mathrm{b}} \cos(\theta_{\mathrm{A,d}})}, \ldots, e^{j2\pi d_{\mathrm{b}}(M_{z}-1) \cos(\theta_{\mathrm{A,d}})}]^{T},
\end{align*}
respectively. Similarly at the RIS, $\myVM{a}{ru,y}{}$ and $\myVM{a}{ru,z}{}$ are defined by,
\begin{align*}
& [1, e^{j2\pi d_{\mathrm{r}} \sin(\theta_{\mathrm{D,r}})\sin(\omega_{\mathrm{D,r}})}, \ldots, e^{j2\pi d_{\mathrm{r}}(N_{y}-1) \sin(\theta_{\mathrm{D,r}})\sin(\omega_{\mathrm{D,r}})}]^{T} \\
& \text{and }[1, e^{j2\pi d_{\mathrm{r}} \cos(\theta_{\mathrm{D,r}})}, \ldots, e^{j2\pi d_{\mathrm{r}} (N_{z}-1) \cos(\theta_{\mathrm{D,r}})}]^{T},
\end{align*}
respectively. Therefore, the LOS components in the UE-BS and UE-RIS channels are given by,
\begin{equation}
\myVM{a}{d}{} = \myVM{a}{d,y}{} \otimes \myVM{a}{d,z}{}
\quad,\quad
\myVM{a}{ru}{} = \myVM{a}{ru,y}{} \otimes \myVM{a}{ru,z}{},
\end{equation} 
where $\theta_{\mathrm{A,d}}$ and $\omega_{\mathrm{A,d}}$ are elevation/azimuth angles of arrival (AOAs) for the UE at the BS and $\theta_{\mathrm{D,r}},\omega_{\mathrm{D,r}}$ are the corresponding angles of departure (AODs) for the UE at the RIS. 
For all results in this paper we use a single sample of angles from $\theta_{\mathrm{A},\mathrm{d}}, \theta_{\mathrm{D},\mathrm{r}}\sim \mathcal{U}(0,180^{\circ})$, $\omega_{\mathrm{A},\mathrm{d}}, \omega_{\mathrm{D},\mathrm{r}} \sim \mathcal{U}(-90^{\circ},90^{\circ})$ which are given by $\theta_{D,r}=80.94^{\circ}, \omega_{D,r}=-64.35^{\circ}, \theta_{A,d}=71.95^{\circ}, \omega_{A,d}=25.1^{\circ}$.

Note that all of these parameter values and variable definitions are not altered throughout the results and figures, unless specified otherwise.

\subsection{Approximate CDF for SNR}\label{SubSec: Gamma approx fit for SNR}
It is known that the SNR of a wide range of fading channels can be approximated by a mixture gamma distribution \cite{GammaApproxSNR}. Also, as discussed in Sec.~\ref{Sec: E{SNR} and Var{SNR}}, it is well-known that a single gamma approximation is often reasonable for a sum of a number of positive random variables \cite{GammaApproxY}. Motivated by this, we approximate the SNR in \eq{\ref{Eq: SNR Eq}} by a single gamma variable.

The shape parameter of a gamma approximation to the SNR is given by $k_{\gamma}=\frac{\Exp{\text{SNR}}^{2}}{\Var{\text{SNR}}}$ and the scale parameter is $\theta_{\gamma}=\frac{\Var{\text{SNR}}}{\Exp{\text{SNR}}}$, where $\Exp{\text{SNR}}$ and $\Var{\text{SNR}}$ are given in Sec.~\ref{Sec: E{SNR} and Var{SNR}}. Using these values of $k_{\gamma},\theta_{\gamma}$, the analytical and simulated SNR CDFs  are shown in Fig.~\ref{Fig: CDF Agreements} for $N=16$ and $N=64$, both with  $\rho_{\mathrm{ru}}=\rho_{\mathrm{d}} \in \{0,0.7,0.95\}$ and $\kd=\kru\in\{1,10^3\}$. When computing the analytical SNR CDFs, for $\rho_{\mathrm{ru}}=\rho_{\mathrm{d}}=0$, \eq{\ref{Eq: Uncorr E{SNR} Ricean}} and \eq{\ref{Eq: Corr Var{SNR} Ricean_uncorr}} were used but with $F_R,C_{u1},C_{u2}$ presented in Sec.~\ref{SubSec: Special Case: UnCorr Ricean}. For $\rho_{\mathrm{ru}}=\rho_{\mathrm{d}}=0.7$, \eq{\ref{Eq: Corr E{SNR} Ricean}} and \eq{\ref{Eq: Corr Var{SNR} Ricean}} were used. For $\rho_{\mathrm{ru}}=\rho_{\mathrm{d}}=0.95$, \eq{\ref{Eq: Corr E{SNR} Ricean}} and \eq{\ref{Eq: Corr Var{SNR} Ricean}} were also used but with $F_R$ given by \eq{\ref{Eq: FR for high kru and rhoru}}.
\begin{figure}[h]
	\centering
	\includegraphics[width=\columnwidth]{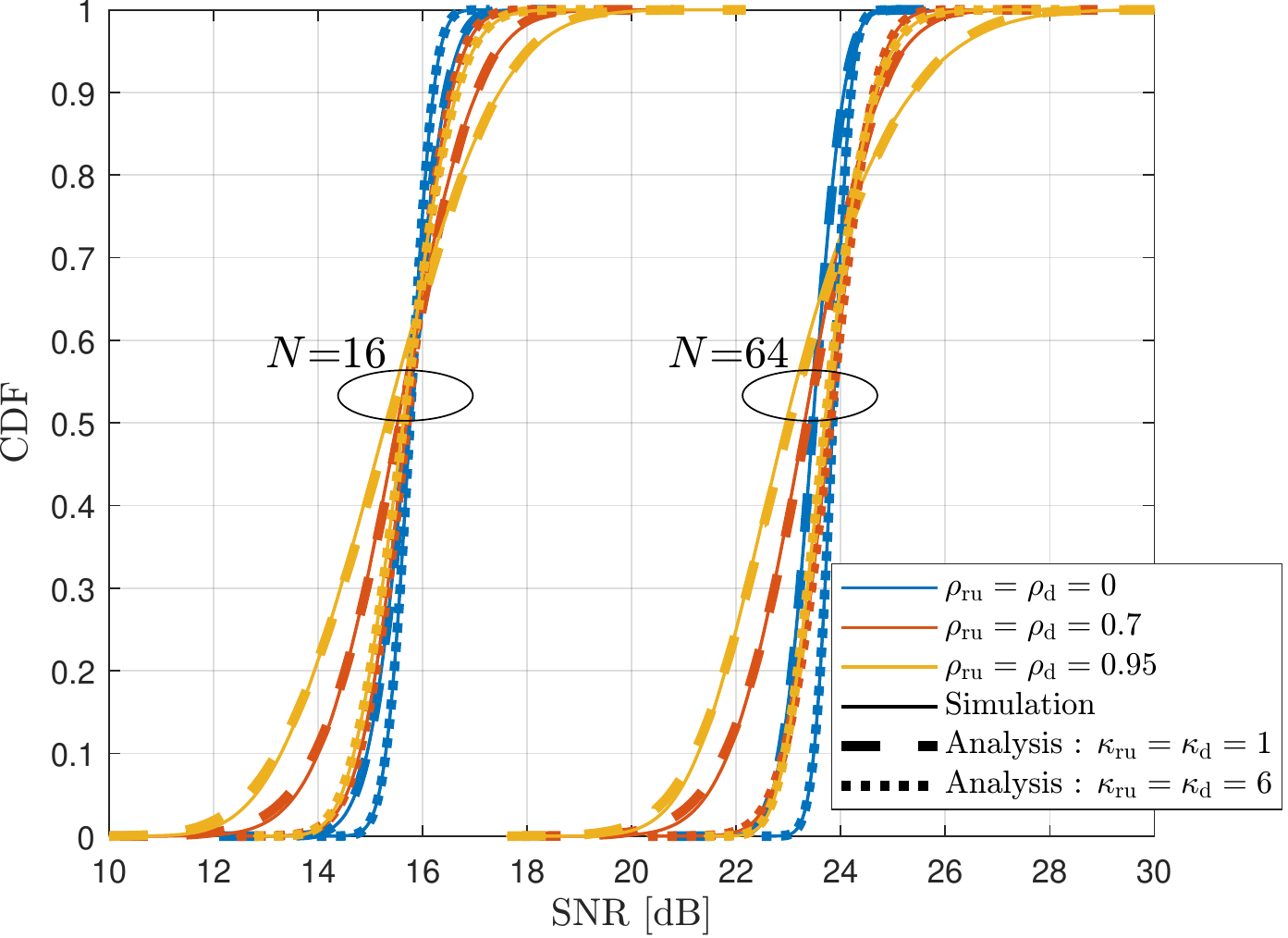}
	\raisecapt\caption{Simulated and analytical CDFs for $N\in\{16,64\}$, both with  $\rho_{\mathrm{ru}}=\rho_{\mathrm{d}}\in\{0,0.7,0.95\}$ and $\kd=\kru\in\{1,6\}$}
	\label{Fig: CDF Agreements}
\end{figure}

As expected, there is a very good agreement between the simulated and analytical SNR CDFs when $\rho_{\mathrm{ru}}=\rho_{\mathrm{d}}=0$ due to exact mean SNR and SNR variance expressions being available for all $\kd$ and $\kru$. For higher correlation, we can see that that the agreement deviates slightly in the lower and upper tails when the K-factor is small. This is due to the approximation made by fitting a gamma distribution using the 3rd and 4th moments of $Y$. For higher K-factors, the 3rd and 4th moments of $Y$ are better approximated due to the UE-RIS channel being dominated by the LOS path, which will cause $Y$ to be close to deterministic. In such scenarios, the gamma distribution provides an even better representation of the UL SNR. In general, for all correlation and K-factor scenarios, the gamma distribution provides a good representation of the UL SNR even in high correlation scenarios.

\subsection{The effects of $\rho_{\mathrm{d}},\rho_{\mathrm{ru}},\kd,\kru$ and Asymptotic Results}\label{SubSec: Asymptotic Analysis Results}
Here, we verify the performance insights based on $\Exp{\mathrm{SNR}}$ described in Sec.~\ref{Sec: Perf. Insights based on E{SNR}}. Fig.~\ref{Fig: Mean SNR analysis} gives the SNR simulations and analyses for various correlation and K-factor combinations in $\myVM{h}{d}{}$ and $\myVM{h}{ru}{}$. As a general observation, it is apparent that the analysis agrees with the simulations for every scenario. 

Inspection of Fig.~\ref{Fig: Mean SNR analysis} reveals that increasing $\kd$ has a negative impact on the mean SNR as predicted by the analysis. The analysis also predicted that correlation in the direct channel $\rho_{\mathrm{d}}$ negatively impacts performance, which is also seen in Fig.~\ref{Fig: Mean SNR analysis}. The best performance occurs for $\rho_{\mathrm{d}}=0,\rho_{\mathrm{ru}}=1,\kd=1,\kru=10^3$, which supports the favourable channel scenario claim in the analyses. Finally, the worst performance is observed when $\rho_{\mathrm{d}}=\rho_{\mathrm{ru}}=0,\kd=10^3,\kru=1$, supporting the unfavourable channel scenario claim in Sec.~\ref{Sec: Perf. Insights based on E{SNR}}. 

Furthermore, although not visible, scenarios where the channel has a very high K-factor will yield very similar curves regardless of the correlation in that channel. For example, the curve for $\rho_{\mathrm{d}}=0,\rho_{\mathrm{ru}}=1,\kd=10^3,\kru=1$ is very similar to the curve for $\rho_{\mathrm{d}}=1,\rho_{\mathrm{ru}}=1,\kd=10^3,\kru=1$. This observation is not unusual since the channel is dominated by the LOS path and is thus not significantly impacted by scattering. 

\begin{figure}[h]
	\centering
	\includegraphics[width=\columnwidth]{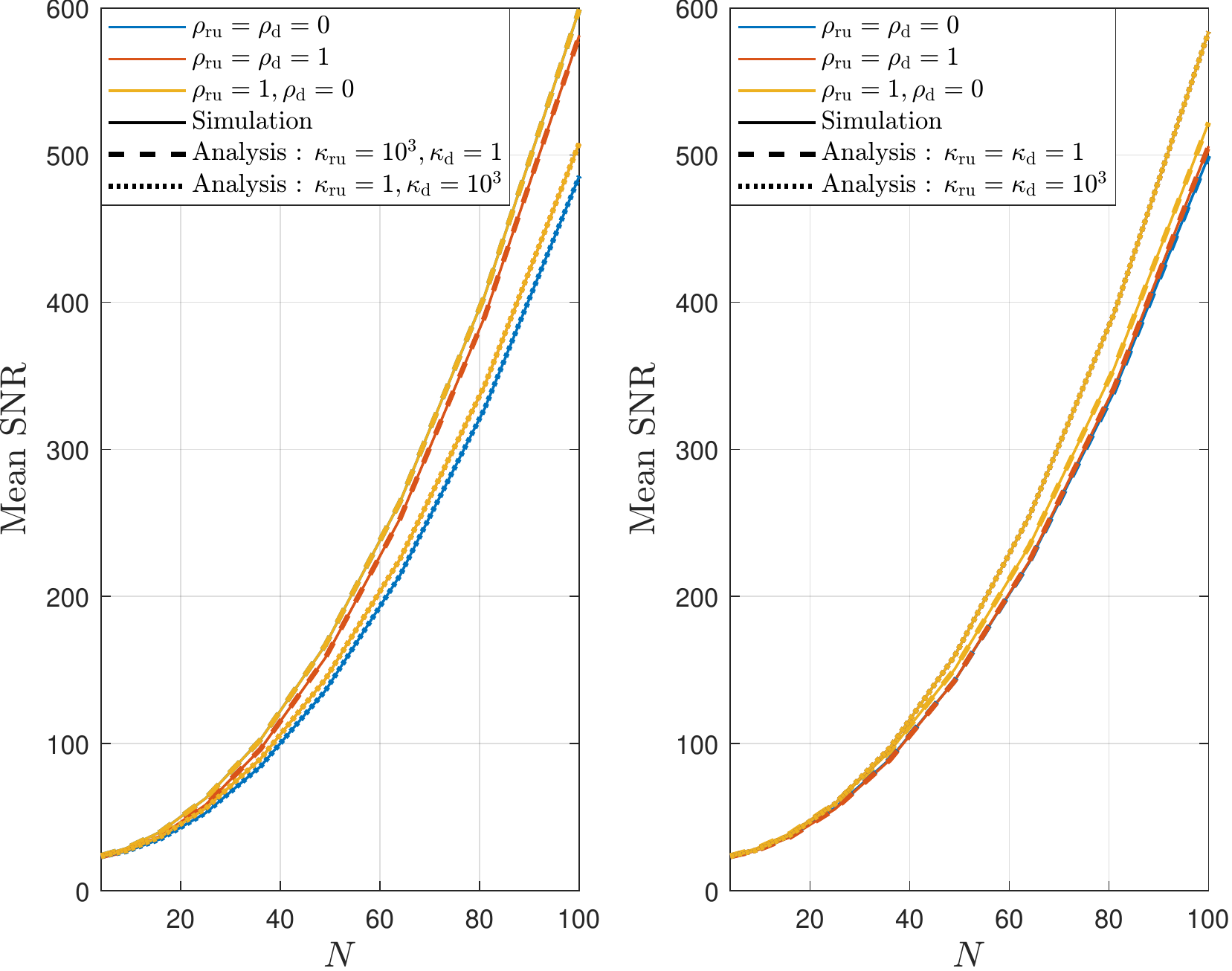}
	\raisecapt\caption{Simulated and analytical mean results for SNR for correlation scenarios: $\rho_{\mathrm{ru}}=\rho_{\mathrm{d}}\in\{0,1\},\rho_{\mathrm{ru}}=1,\rho_{\mathrm{d}}=0$ and K-factor scenarios: $\kd\in\{1,10^3\},\kru\in\{1,10^3\}$.}
	\label{Fig: Mean SNR analysis}
\end{figure}
Fig.~\ref{Fig: Var SNR Analysis} shows the accuracy of the SNR variance approximation. As expected, for scenarios where $\myVM{h}{d}{}$ and $\myVM{h}{ru}{}$ are uncorrelated, we note perfect agreement between the simulations and analyses. As the K-factor in the UE-RIS channel, $\kru$, increases, the variance is reduced, which becomes significant when both $\kru$ and $\kd$ are very large. This effect is also observed by the change in the CDF spread in Fig.~\ref{Fig: CDF Agreements}. 

For scenarios where the K-factor is low, the accuracy of the agreement between simulation and analyses is a consequence of the approximations of the 3rd and 4th moments of $Y$ only, since all of the other terms in the expression can be expressed in closed form. From Fig.~\ref{Fig: Var SNR Analysis} it can be seen that as the number of RIS elements $N$ increases, the approximation worsens for highly correlated scenarios with $\kru=1$ and $\kd=10^3$. However, for the identical correlation scenarios with $\kru=10^3$ and $\kd=1$, the approximation improves. Note that $Y$ is the sum of amplitudes of the correlated elements in $\myVM{h}{ru}{}$. When the K-factor is very high, then $\Exp{Y^3} \approx Y^3$ and $\Exp{Y^4}\approx Y^4$, hence the observed improvement in the approximation.

The convergence of the various curves in Fig.~\ref{Fig: Var SNR Analysis} supports the claim in the analyses that increasing the number of RIS elements reduces the negative impacts of $\kd$ and $\rho_{\mathrm{d}}$ in the UE-BS channel. This is seen in the curves relating to $\rho_{\mathrm{ru}}=\rho_{\mathrm{d}}=1$ and $\rho_{\mathrm{ru}}=1,\rho_{\mathrm{d}}=0$ as well as the curves relating to $\kru=\kd=1$ and $\kru=1,\kd=10^3$.
\begin{figure}[h]
	\centering
	\includegraphics[width=\columnwidth]{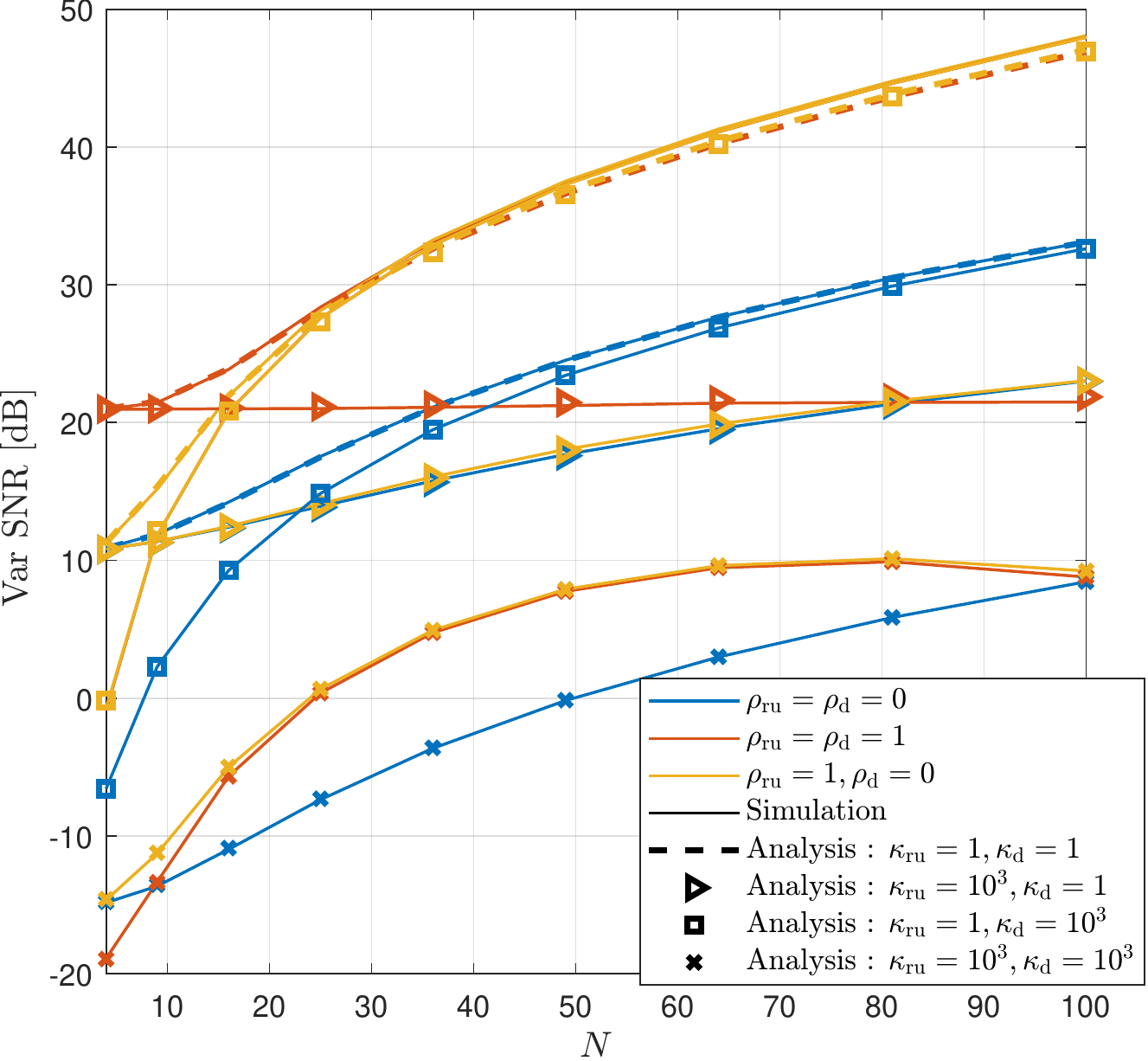}
	\raisecapt\caption{Simulated and analytical variance results for SNR for correlation scenarios: $\rho_{\mathrm{ru}}=\rho_{\mathrm{d}}\in\{0,1\},\rho_{\mathrm{ru}}=1,\rho_{\mathrm{d}}=0$ and K-factor scenarios: $\kd\in\{1,10^3\},\kru\in\{1,10^3\}$.}
	\label{Fig: Var SNR Analysis}
\end{figure}

Finally, we verify the results regarding the relative gain given in Sec.~\ref{Sec: Perf. Insights based on E{SNR}}. The relative gain improvement in \eq{\ref{Eq: Gain fav-unfav}}, when transitioning from unfavourable to favourable channel scenarios, is simulated for a range of different link gains $\betad,\betabr,\betaru$ and shown in Fig.~\ref{Fig: Optimization and Asymptotic Analysis}. 
\begin{figure}[h]
	\centering
	\includegraphics[width=\columnwidth]{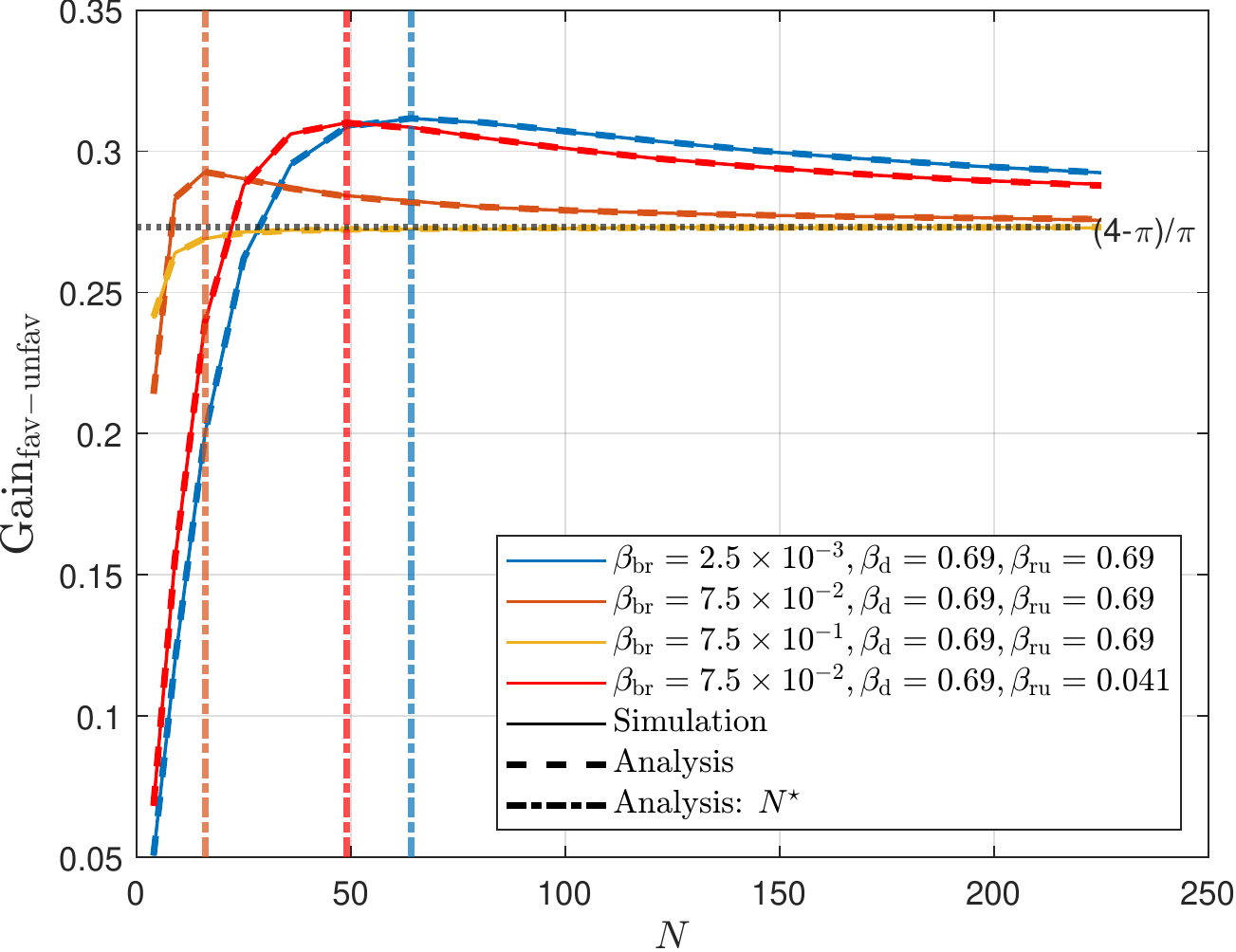}
	\raisecapt\caption{Simulated and analytical relative gain results for a range of link gains $\betad,\betabr,\betaru$.}
	\label{Fig: Optimization and Asymptotic Analysis}
\end{figure}
As can be seen, all of the different link gain scenarios asymptotically approach $(4-\pi)/\pi$ as per the analysis and the saturation rate is dependent on the link gain values. Also apparent is that having a large number of RIS elements does not always yield the maximum relative gain. The vertical lines represent the number of RIS elements required to achieve the maximum possible gain, as detailed in Sec.~\ref{SubSec: Max, Min, Fav E{SNR}}. In general, the simulation and analysis agree.

\section{Conclusion}\label{Sec: Conclusion}
We derive an exact closed form expression for the mean SNR of the optimal single user RIS design where spatially correlated Ricean fading is assumed for the UE-BS and UE-RIS channels and the RIS-BS channel is LOS. We also provide an accurate approximation of the SNR variance and a gamma approximation to the CDF of the SNR. Furthermore, we show that when the UE-RIS and UE-BS channels are correlated Rayleigh, both the mean SNR and SNR variance expressions under correlated Ricean channels can be reduced to the expressions given in \cite{ISinghRayleigh}. The analysis presented offers new insight into how spatial correlation and the Ricean K-factor impact the mean SNR and scenarios in which we would expect high and low SNR performance. Specifically, we show that the mean SNR benefits from having a high correlation and a high K-factor in UE-RIS channel while the UE-BS channel has low correlation and a low K-factor. These correlation and K-factor extremes represent favorable channel scenarios while unfavorable channel scenarios occur at the opposite ends of these extremes. To this end, we show that the asymptotic gain achievable when transitioning between unfavourable and favourable environments is $(4-\pi)/\pi$. 
 

\begin{appendices}
	
\section{Derivation of mean SNR}\label{AppA: E{SNR} Corr Deri}
Expanding \eq{\ref{Eq: SNR Eq}} gives,
\begin{align*}
	\text{SNR} & = \left( \betad\hdtil{H}\hdtil{} + 2\sqrt{\betad}\Re\{\alpha\hdtil{H}\myVM{a}{b}{}\} + \Abs{\alpha}{2}M \right) \bar{\tau} \\
	& \triangleq (S_{1} + S_{2} + S_{3}) \bar{\tau}.
\end{align*}
We compute $\Exp{\text{SNR}}$ by computing the expectation of each term in the expression above.

\vspace{1em}
\noindent \textbf{Term 1}: Expanding $S_{1}$,
\begin{equation*}
	S_{1} = \betad\left( 
	\etad^{2}\ad{H}\ad{} +  2\etad\zetad\Re\{ \ad{H}\hdNLOS{} \} 
	+ \zetad^{2}\myVM{u}{d}{H}\myVM{R}{d}{}\myVM{u}{d}{}
	\right).
\end{equation*}
Using \cite[Eq. (9)]{8422818} and since $\myVM{u}{d}{} \sim \mathcal{CN}(\mathbf{0},\mathbf{I})$, then
\begin{equation}\label{Eq: S1}
	\Exp{S_{1}} = \betad\left(\etad^{2}M + \zetad^{2}\tr{R}{d}{} \right) = \betad M.
\end{equation}

\vspace{1em}
\noindent \textbf{Term 2}: Substituting $\alpha$ from Sec. \ref{Sec: Optimal SNR} into $S_{2}$,
\begin{align*}
	\Exp{S_2} 
	& =2 \sqrt{\beta_{\mathrm{d}}\beta_{\mathrm{br}}\beta_{\mathrm{ru}}} 
	\mathbb{E}\left\{ \sum_{n=1}^{N}\Abs{\myVMIndex{\tilde{h}}{ru}{n}{}}{} \right\} \mathbb{E}\left\{ \Abs{\myVM{a}{b}{H} \hdtil{}}{} \right\} \\
	& = 2\sqrt{\beta_{\mathrm{d}}\beta_{\mathrm{br}}\beta_{\mathrm{ru}}} 
	\mathbb{E}\left\{ Y \right\} 
	\mathbb{E}\left\{ \Abs{\myVM{a}{b}{H} \hdtil{}}{} \right\}.
\end{align*}
Firstly, we can rewrite the definition of the $i^{th}$ element of $\hrutil{}$ in Sec. \ref{Sec: Channel Model} as $\mathbf{\tilde{h}}_{\mathrm{ru},i} = X_{1} + jX_{2}$ where $X_{1} \sim \mathcal{N}(\etaru\Re\{\aru{}\},(\zetaru^{2}/2)\mathbf{I})$ and $X_{2} \sim \mathcal{N}(\etaru\Im\{\aru{}\},(\zetaru^{2}/2)\mathbf{I})$. Then using the moments of a Ricean RV \cite[Eq. (2.3-58)]{DigCom} and a confluent hypergeometric to Laguerre polynomial relation \cite[Eq. (13.6.9)]{Stegun}, we have
\begin{equation*}
	\mathbb{E}\left\{ Y \right\}  = \frac{N\zetaru\sqrt{\pi}}{2} \LagP{-\kru}.
\end{equation*}
Similarly,
\begin{equation*}
	\Exp{\Abs{\myVM{a}{b}{H} \hdtil{}}{}} = 
	\frac{A \zetad \sqrt{\pi}}{2} \LagP{-\frac{\kd \Abs{\myVM{a}{b}{H}\ad{}}{2}}{A^{2}}},
\end{equation*}
where $A = \Norm{\myVM{R}{d}{1/2}\myVM{a}{b}{}}{2}{}$.	Therefore,
\begin{align}\label{Eq: S2}
	& \Exp{S_{2}} \notag \\
	& = \frac{NA\zetad\zetaru\pi\sqrt{\betad\betabr\betaru}}{2}
	\LagP{-\kru}\LagP{-\frac{\kd \Abs{\myVM{a}{b}{H}\ad{}}{2}}{A^{2}}}.
\end{align}

\vspace{1em}
\noindent \textbf{Term 3}: Since  $\Abs{\psi}{} = 1$ (where $\psi$ is given in Sec.~\ref{Sec: Optimal SNR}) we have $S_3= 
M\beta_{\mathrm{br}}\beta_{\mathrm{ru}} Y^{2}$ and expanding $Y$ gives
\begin{align*}
	\mathbb{E}\left\{ Y^{2} \right\} 
	= N + \underset{i \neq k}{\sum_{i=1}^{N} \sum_{k=1}^{N}} \mathbb{E}\left\{ \Abs{\myVMIndex{\tilde{h}}{ru}{i}{}}{} \Abs{\myVMIndex{\tilde{h}}{ru}{k}{}}{}  \right\}.
\end{align*}
Let $r_{i}=\Abs{\myVMIndex{\tilde{h}}{ru}{i}{}}{}$, $r_{k}=\Abs{\myVMIndex{\tilde{h}}{ru}{k}{}}{}$, $a = \etaru$, $\sigma_{\mathrm{ru}}^{2} = \zetaru^{2}/2$, $\rho_{ik} = \left(\myVM{R}{ru}{}\right)_{i,k}$, then the joint moment of two correlated Rician RVs $r_{i},r_{k}$ is given by \cite[Eq. (27)]{Yacoub},
\begin{align*}
	& \Exp{r_{i}r_{k}} = \frac{\left(1 - \Abs{\rho_{ik}}{2}\right)^2}{1 + \kru} 
	\mathrm{exp}\left\{ -\frac{2\kru - 2\mu_c\kru}{1 - \Abs{\rho_{ik}}{2}}\right\} \\
	& \times \sum_{m=0}^{\infty}\sum_{n=0}^{m} \frac{\epsilon_n \cos(n\phi) \Abs{\rho_{ik}}{2m - n}}{m! (m-n)! (n!)^2} \left( \frac{\kru(1 + \Abs{\rho_{ik}}{2} - 2\mu_c)}{1 - \Abs{\rho_{ik}}{2}} \right)^n \\
	& \times \Gamma^2\left( m + \frac{3}{2}\right) 
	{}_{1}F_{1}^2\left(m+\frac{3}{2}, n+1, \frac{\kru(1 + \Abs{\rho_{ij}}{2} - 2\mu_c)}{1 - \Abs{\rho_{ik}}{2}}  \right),
\end{align*}
where ${}_{1}F_{1}(\cdot)$ is the confluent hypergeometric function, $\phi = \arg\{ (1 + \Abs{\rho_{ik}}{2})\mu_c\kru - 2\kru\Abs{\rho_{ik}}{2} + j(1 - \Abs{\rho_{ik}}{2}\mu_s\kru)\}$, $\mu_c = \Abs{\rho_{ik}}{}\cos(\Delta\theta)$, $\mu_s = \Abs{\rho_{ik}}{}\sin(\Delta\theta)$, $\Delta\theta = \arg\{\mathbf{\aru{}}_{,i}\} - \arg\{\mathbf{\aru{}}_{,k}\}$ and $\epsilon_n$  is given in Theorem 1 (see \cite[Eq.~(26b)]{Yacoub}).

Finally, we substitute the final form of $\Exp{r_{i}r_{k}}$ into the double summation of $\Exp{Y^{2}}$ and denote this as $F_{\mathbf{R}}$. Therefore,
\begin{equation}\label{Eq: S3}
	\Exp{S_{2}} = M\betabr\betaru(N + F_{\mathrm{R}}),
\end{equation}
where $F_{\mathrm{R}}$ is given by 	\eq{\ref{Eq: Whittaker M}}.
Combining \eq{\ref{Eq: S1}}, \eq{\ref{Eq: S2}} and \eq{\ref{Eq: S3}} completes the derivation.

\section{Derivation of SNR Variance}\label{App: Var{SNR} Corr Deri}
To compute the variance we take the square of \eq{\ref{Eq: SNR Eq}} giving,
\begin{align}\label{Eq: SNR^2}
\text{SNR}^{2} & = 
\Big( \betad^{2} \left( \hdtil{H}\hdtil{} \right)^{2} + 
4\betad^{3/2} \hdtil{H}\hdtil{} \Re\left\{\alpha \hdtil{H}\myVM{a}{b}{}  \right\} \notag \\
& +  2\betad M \Abs{\alpha}{2} \hdtil{H}\hdtil{}  + 
4\betad\Re\left\{\alpha \hdtil{H}\myVM{a}{b}{}  \right\}^{2} \notag \\
& + 4 \sqrt{\beta_{\mathrm{d}}} M \Abs{\alpha}{2} \Re\left\{ \alpha\hdtil{H}\myVM{a}{b}{} \right\} + \Abs{\alpha}{4} M^{2}  \Big) \bar{\tau}^{2}\notag \\
& \triangleq (T_1+T_2+T_3+T_4+T_5+T_6)\bar{\tau}^{2}.
\end{align}

\vspace{1em}
\noindent \textbf{Term 1}: $T_{1}$ can be alternately written as $T_{1} = \betad \Norm{\hdtil{}}{2}{4}$. Using \cite[Eq. (5)]{7976368}, we have
\begin{equation}
	\Exp{T_{1}} = \betad^{2} \left(
	M^{2}  + 
	2\etad^{2}\zetad^{2} \Norm{\myVM{R}{d}{1/2}\ad{}}{2}{2} +
	\zetad^{4}\tr{R}{d}{2}
	\right).
\end{equation}

\vspace{1em}
\noindent \textbf{Term 2}: Substituting $\alpha$ from Sec. \ref{Sec: Optimal SNR} into $T_{2}$,
\begin{align}\label{App: E{T2}}
	\Exp{T_{2}} & = 4 \betad^{3/2} \sqrt{\betabr\betaru} \Exp{Y} \Exp{\hdtil{H}\hdtil{}\Abs{\hdtil{H}\myVM{a}{b}{}}{}} \notag \\
	& \mkern-36mu = 2 \zetaru \betad^{3/2}\sqrt{\betabr\betaru} N \sqrt{\pi} L_{1/2}(-\kru) \Exp{\hdtil{H}\hdtil{}\Abs{\hdtil{H}\myVM{a}{b}{}}{}},
\end{align} 
where the expectation of $Y$ is found in App.~\ref{AppA: E{SNR} Corr Deri}. To compute $\Exp{\hdtil{H}\hdtil{}\Abs{\hdtil{H}\myVM{a}{b}{}}{}}$, we introduce the following variables: $\myVM{Q}{}{} = \myVM{P}{}{H}\myVM{R}{d}{}\myVM{P}{}{}$ and $\myVM{p}{1}{} = \myVM{R}{d}{1/2}\myVM{a}{b}{}/A$
where $\myVM{p}{1}{}$ is the first column of any arbitrary orthonormal matrix $\myVM{P}{}{}$, $A = \Norm{\myVM{R}{d}{1/2}\myVM{a}{b}{}}{2}{}$, $\myVM{x}{}{} \sim \mathcal{CN}(\mathbf{0},\mathbf{I})$, $b = \etad\ad{H}\myVM{a}{b}{}/A$ and $\myVM{d}{}{} = \etad \myVM{P}{}{H} \myVM{R}{d}{-1/2} \ad{} $. Then $\Exp{\hdtil{H}\hdtil{}\Abs{\hdtil{H}\myVM{a}{b}{}}{}}$ can be rewritten as,
\begin{align}\label{App: E{Complicating}}
	&\Exp{\hdtil{H}\hdtil{}\Abs{\hdtil{H}\myVM{a}{b}{}}{}} \notag \\ 
	& = A \Bigg(\sum_{s=1}^{M}\sum_{t=1}^{M} \zetad^{2}\myVM{Q}{st}{}\Exp{ x_{s}^{H}x_{t}\Abs{\zetad x_{1} + b}{}} \notag \\
	& + 2\Re\left\{ \sum_{s=1}^{M}\sum_{t=1}^{M} \zetad\myVM{Q}{st}{}d_{s}^{H}\Exp{ x_{t}\Abs{\zetad x_{1} + b}{}}\right\} \notag \\
	& + \sum_{s=1}^{M}\sum_{t=1}^{M} \myVM{Q}{st}{}d_{s}^{H}d_{t}\Exp{\Abs{\zetad x_{1} + b}{}} \Bigg) \notag \\
	& \triangleq A \left(T_{21} + T_{22} + T_{23}\right).
\end{align}
Computing \eq{\ref{App: E{Complicating}}} and substituting it into \eq{\ref{App: E{T2}}} completes the derivation for $\Exp{T_2}$.

Sub-Term 1: Notice that $T_{21}$ is non zero iff $s=t$. So,
\begin{align*}
	T_{21} & = \myVM{Q}{11}{} \Exp{\Abs{\zetad x_{1}}{2}\Abs{\zetad x_{1} + b}{}} \\
	& \qquad + \Exp{\Abs{\zetad x_{1} + b}{}} \left( \tr{Q}{}{} - \myVM{Q}{11}{} \right) \\
	& = B\zetad^{2}\mathbb{E}\Big\{ \Abs{\zetad x_{1} + b}{3} - 2\Re\{ b^{H}(\zetad x_{1} + b)\Abs{\zetad x_{1} + b}{} \}  \\
	& \qquad + \Abs{b}{2}\Abs{\zetad x_{1} + b}{} \Big\} + \Exp{\Abs{\zetad x_{1} + b}{}}\left( M - B \right),
\end{align*}
since $\tr{Q}{}{} = M$ and $B = \mathbf{Q}_{11} = \Norm{\myVM{R}{d}{}\myVM{a}{b}{}}{2}{2}/A^2$. The closed form solution for $\mathcal{I} \triangleq \Exp{(\zetad x_{1} + b)\Abs{\zetad x_{1} + b}{} \}}$ is found in App.~\ref{App: Curly I} 
\footnote{It is worth noting that App.~\ref{App: Curly I} also gives the general solution for $\Exp{(ax + b)\Abs{ax + b}{}}$ where $x\sim\mathcal{CN}(0,1)$, $a\in\mathbb{R}$ and $b\in\mathbb{C}$.}.
Using this result, along with the 1st and 3rd moments of a Ricean RV and some algebraic simplifiction, 
\begin{multline}\label{Eq: T21}
		T_{21} = \frac{3B\zetad^{3}\sqrt{\pi}}{4} L_{3/2}\left(-C\right)  
		- \frac{2B\etad}{A}\Re\{\myVM{a}{b}{H}\ad{} \mathcal{I} \} \\
		+ \frac{\zetad}{2}L_{1/2}\left(-C\right) \left[ \frac{B\kd\Abs{\ad{H}\myVM{a}{b}{}}{2}}{A^2 (1+\kd)} + \frac{\sqrt{\pi}(M-B)}{1+\kd}\right],
\end{multline}
where $C = \kd\Abs{\ad{H}\myVM{a}{b}{}}{2}/A^{2}$.

Sub-Term 2: The terms inside the real operator in $T_{22}$ can be written as,
\begin{align*}
	\sum_{s=1}^{M} \myVM{Q}{s1}{}\myVM{d}{s}{H} \left( \Exp{(\zetad x_{1} + b)\Abs{\zetad x_{1} + b}{}} - b\Exp{\Abs{\zetad x_{1}}{b}} \right).
\end{align*}
Using App.~\ref{App: Curly I} and the 1st moment of a Ricean RV,
\begin{align}
	& T_{22} \notag \\ 
	& = \frac{2\etad}{A}
	\Re\left\{
	\ad{H}\myVM{R}{d}{}\myVM{a}{b}{} \left[ \mathcal{I} - \frac{b\zetad\sqrt{\pi}}{2}L_{1/2}\left(-\frac{\kd\Abs{\ad{H}\myVM{a}{b}{}}{2}}{A^{2}} \right) \right] \right\}.
\end{align}

Sub-Term 3: Using the first moment of a Ricean RV and noting that $\myVM{d}{}{H}\myVM{Q}{}{}\myVM{d}{}{} = \etad^2 M$,
\begin{equation}
	T_{23} = \frac{\etad^{2}\zetad M \sqrt{\pi}}{2}L_{1/2}\left(-\frac{\kd\Abs{\ad{H}\myVM{a}{b}{}}{2}}{A^{2}} \right).
\end{equation}

The result of $\Exp{T_2}$ is,
\begin{align}
	\Exp{T_{2}} & = 
	2 \zetaru \betad^{3/2}\sqrt{\betabr\betaru} N \sqrt{\pi} L_{1/2}(-\kru) \notag \\
	& \qquad \times A (T_{21} + T_{22} + T_{23}).
\end{align}

\vspace{1em}
\noindent \textbf{Term 3}: Substituting $\alpha$ from Sec. \ref{Sec: Optimal SNR} into $T_{3}$,
\begin{align}
	\Exp{T_{3}} & = 2M\betad\betabr\betaru\Exp{Y^{2}}\Exp{\hdtil{H}\hdtil{}} \notag \\
	& = 2M^{2}\betad\betabr\betaru(N + F_{R}),
\end{align}
where the results for $\Exp{\hdtil{H}\hdtil{}}$ and $\Exp{Y^{2}}$ are given in App.~\ref{AppA: E{SNR} Corr Deri} and $F_{\mathrm{R}}$ is given by \eq{\ref{Eq: Whittaker M}}.

\vspace{1em}
\noindent \textbf{Term 4}: Substituting $\alpha$ from Sec. \ref{Sec: Optimal SNR} into $T_{4}$,
\begin{align}
	& \Exp{T_{4}} = 4\betad\betabr\betaru\Exp{Y^{2}}\Exp{\Abs{\myVM{a}{b}{H}\hdtil{}}{2}} \\
	& = 4\betad\betabr\betaru(N+F_{\mathrm{R}})\Exp{\Abs{\myVM{a}{b}{H}\hdtil{}}{2}},
\end{align}
using the result for $\Exp{Y^{2}}$ in App.~\ref{AppA: E{SNR} Corr Deri}. Notice that $\Exp{\Abs{\myVM{a}{b}{H}\hdtil{}}{2}}$ is the second moment of $\Abs{\myVM{a}{b}{H}\hdtil{}}{}$. Using the result for $\Exp{\Abs{\myVM{a}{b}{H}\hdtil{}}{}}$ in App.~\ref{AppA: E{SNR} Corr Deri} and the moments of a Ricean RV \cite[Eq. (2.3-58)]{DigCom},
\begin{equation*}
	\Exp{\Abs{\myVM{a}{b}{H}\hdtil{}}{2}} = 
	\etad^{2}\Abs{\myVM{a}{b}{H}\ad{}}{2} + \zetad^{2}\Norm{\myVM{R}{d}{1/2}\myVM{a}{b}{}}{2}{2}.
\end{equation*}
Therefore,
\begin{equation}
	\Exp{T_{4}} = 4\betad\betabr\betaru(N+F_{\mathrm{R}})\left(
	\etad^{2}\Abs{\myVM{a}{b}{H}\ad{}}{2} + \zetad^{2}A^{2}
	\right),
\end{equation}
where $A$ is given in App.~\ref{AppA: E{SNR} Corr Deri} and $F_{\mathrm{R}}$ is given by \eq{\ref{Eq: Whittaker M}}.

\vspace{1em}
\noindent \textbf{Term 5}: Substituting $\alpha$ from Sec. \ref{Sec: Optimal SNR} into $T_{5}$,
\begin{align*}
	& \Exp{T_{5}} = 4M\sqrt{\betad}\left( \betabr\betaru \right)^{3/2} 
	\Exp{Y^{3}} \Exp{\Abs{\myVM{a}{b}{H}\hdtil{}}{}} \notag \\
	& = 2AM\zetad\sqrt{\pi\betad}\left( \betabr\betaru \right)^{3/2} 
	  \LagP{-\frac{\kd \Abs{\myVM{a}{b}{H}\ad{}}{2}}{A^{2}}}
	\Exp{Y^{3}},
\end{align*}
using the result for $\Exp{\Abs{\myVM{a}{b}{H}\hdtil{}}{}}$ in App.~\ref{AppA: E{SNR} Corr Deri}. The variable $Y^{3}$ is a sum of products of the magnitudes of three correlated Ricean random variables. To the best of our knowledge the mean of such terms is intractable without the use of multiple infinite summations and special functions. As such we will use an approximation based on the gamma distribution to approximate $\Exp{Y^3}$ (see App.~\ref{App: E{Y^3},E{Y^4} Approx}). Substituting this approximation, we have
\begin{multline}
	\Exp{T_{5}} \\
	\approx 
	2AM\zetad\sqrt{\pi\betad}\left( \betabr\betaru \right)^{3/2} 
	\LagP{-\frac{\kd \Abs{\myVM{a}{b}{H}\ad{}}{2}}{A^{2}}}
	C_{1},
\end{multline}
with $C_{1} = b^3 a \prod_{k=1}^{2}(k+a)$ where $a$ and $b$ are defined in App.~\ref{App: E{Y^3},E{Y^4} Approx}.

\vspace{1em}
\noindent \textbf{Term 6}: Substituting $\alpha$ from Sec. \ref{Sec: Optimal SNR} into $T_{6}$,
\begin{equation*}
	\Exp{T_{6}} = \left( M\betabr\betaru \right)^{2} \Exp{Y^{4}}.
\end{equation*}
Since $Y^4$ is even more complex than $Y^3$, we re-use the gamma approximation for $Y$ to make progress.  Using the approximation for $\Exp{Y^{4}}$ in App.~\ref{App: E{Y^3},E{Y^4} Approx} gives,
\begin{equation}
	\Exp{T_{6}} \approx \left( M\betabr\betaru \right)^{2} C_{2},
\end{equation}
with $C_{2} = b^4 a \prod_{k=1}^{3}(k+a)$ where $a$ and $b$ are defined in App.~\ref{App: E{Y^3},E{Y^4} Approx}.

\section{$\Exp{(ax+b)\Abs{ax+b}{}}$ and $\mathcal{I}$}\label{App: Curly I}
Here, we derive the expected value of $(ax+b)\Abs{ax+b}{}$ where $x \sim \mathcal{CN}(0,1)$, $a \in \mathbb{R}$ and $b \in \mathbb{C}$. Let $ax+b = \rho e^{j\theta}$ where $\rho$ and $\theta$ are the amplitude and phase of $ax+b$ respectively. Then, the joint PDF of the amplitude and phase is \cite[Eq. (2.4)]{Miller}
\begin{equation*}
	f(\rho,\theta) = 
	\frac{\rho}{\pi a^2} 
	\exp\left\{-\frac{\rho^2}{a^2} + 2\frac{\Abs{b}{}}{a^2}\rho \cos(\theta - \angle b) -\frac{\Abs{b}{2}}{a^2} \right\}.
\end{equation*}
Using the joint PDF, the expectation $\Exp{(ax+b)\Abs{ax+b}{}}$ is,
\begin{align}\label{Eq: Gen E{Z|Z|}}
	&\Exp{\rho^2 e^{j\theta}} \notag \\
	& = \frac{e^{j\angle b - \frac{\Abs{b}{2}}{a^2}}}{\pi a^2}
	\int_{0}^{\infty} \rho^3 e^{-\frac{\rho^2}{a^2}}
	\int_{0}^{2\pi} e^{j\theta}e^{2\frac{\Abs{b}{}}{a^2}\rho \cos(\theta)}
	d\theta d\rho \notag \\
	& \overset{(a)}{=} \frac{2e^{j\angle b - \frac{\Abs{b}{2}}{a^2}}}{a^2}
	\int_{0}^{\infty} 
	\rho^3 e^{-\frac{\rho^2}{a^2}} I_{1}\left( \frac{2\Abs{b}{}\rho}{a^2} \right)
	d\rho \notag \\
	& \overset{(b)}{=}
	\frac{-3 a^3 \sqrt{\pi}}{4 \Abs{b}{}} 
	\exp\left\{ j\angle b - \frac{\Abs{b}{2}}{2a^2} \right\}
	M_{3/2,1/2}\left( -\frac{\Abs{b}{2}}{a^2} \right),
\end{align}
where $M_{3/2,1/2}(\cdot)$ is the Whittaker M function. $(a)$ uses \cite[Eq. (3.937.1)]{GradRyz} to evaluate the integral with respect to $\theta$ while utilizing the fact that the integral over one period of an even function multiplied by an odd function is zero. $(b)$ uses \cite[Eq. (6.631.1)]{GradRyz} to evaluate the integral with respect to $\rho$ along with some algebraic simplifications. 

The SNR variance expression in \eq{\ref{Eq: Corr Var{SNR} Ricean}} utilizes this result with specific values for $a$ and $b$; i.e., $a = \zetad$ and $b = \etad \ad{H}\myVM{a}{b}{}/A$ where $A$ can be found in App.~\ref{AppA: E{SNR} Corr Deri}. Substituting these values of $a$ and $b$ into \eq{\ref{Eq: Gen E{Z|Z|}}} and after some simplification we have,
\begin{align}\label{Eq: Curly I}
	\mathcal{I} & = 
	\frac{-3 A \sqrt{\pi}}{4\sqrt{\kd}(1+\kd)} 
	\exp\left\{ j\angle \ad{H}\myVM{a}{b}{} - \frac{\kd\Abs{\ad{H}\myVM{a}{b}{}}{2}}{2 A^2} \right\} \notag \\
	& \times M_{3/2,1/2}\left( -\frac{\kd\Abs{\ad{H}\myVM{a}{b}{}}{2}}{A^2} \right),
\end{align}
where $\mathcal{I}$ in \eq{\ref{Eq: Fancy I}} is required in Theorem 2.

\section{Approximations for $\Exp{Y^3}$ and $\Exp{Y^4}$}\label{App: E{Y^3},E{Y^4} Approx}
As discussed in Sec.~\ref{Sec: E{SNR} and Var{SNR}}, we employ a gamma approximation for $Y$. From App.~\ref{AppA: E{SNR} Corr Deri}, we know that $\Exp{Y}=\frac{N\zetaru\sqrt{\pi}}{2} \LagP{-\kru}$ and  $\Exp{Y^2} = N+F_{\mathrm{R}}$, where $F_{\mathrm{R}}$ is defined by \eq{\ref{Eq: Whittaker M}}. Then, the variance of $Y$ is $\Var{Y}= N + F_{\mathrm{R}} - \frac{N^2 \zetaru^2 \pi^2}{4} L_{1/2}^{2}(-\kru)$. Using the method of moments, the parameters that define a gamma distribution fit for $Y$  are, 
\begin{align*}
	a & = \dfrac{\Exp{Y}^{2}}{\Var{Y}} = \dfrac{N^{2} \pi \zetaru^{2} L_{1/2}^{2}(\kru)}{4(N+F_{\mathrm{R}}) - N^{2} \pi \zetaru^{2} L_{1/2}^{2}(\kru)}, \\
	b & = \dfrac{\Var{Y}}{\Exp{Y}} \\
	& = \dfrac{2}{N \sqrt{\pi} \zetaru L_{1/2}(\kru)}\left( N + F_{\mathrm{R}} - \frac{N^2 \zetaru^2 \pi^2}{4} L_{1/2}^{2}(-\kru) \right),
\end{align*}
where $a$ and $b$ are the shape and scale parameters respectively. Suppose $X \sim \mathcal{G}(a,b)$, then  the 3$^{\text{rd}}$ and $4^{\text{th}}$ moments are 
\begin{align*}
	\Exp{X^3}  & = b^3 a \prod_{k=1}^{2}(k+a), \quad
	\Exp{X^4}  = b^4 a \prod_{k=1}^{3}(k+a).
\end{align*}
Substituting $a$ and $b$ into the above moments yields the approximations for $\Exp{Y^3}$ and $\Exp{Y^4}$.

\section{Correlated Rayleigh equivalent of $F_{\mathrm{R}}$}\label{App: Reduce FR to Corr Rayleigh}
When $\myVM{h}{ru}{}$ is a Rayleigh fading channel, we have $\kru=0$, $\phi=0$ and $I_{0}(0) = 1$. The integral form of $\Exp{r_{i}r_{k}}$ given by \cite[Eq. (42)]{Yacoub} then becomes
\begin{align*}
	\Exp{r_{i}r_{k}} & = \frac{4}{1 - \rhoruij{2}} \int_{0}^{\infty} \int_{0}^{\infty} r_i^2 r_k^2 e^{-(r_i^2 + r_k^2)/ (1 - \rhoruij{2})} \\
	& \hspace{6em} \times I_{0}\left( \frac{ 2\rhoruij{} r_i r_k}{1-\rhoruij{2}} \right) \ dr_i dr_k.
\end{align*}
Replacing the modified Bessel function of the first kind with its infinite series equivalent, we have
\begin{align}\label{App: Integral for of FR in Rayleigh case}
	& \Exp{r_{i}r_{k}}\\
	& = \sum_{m=0}^{\infty} \frac{4\rhoruij{2m}}{(1 - \rhoruij{2})^{2m+1} (m!)^2}
	\vspace{3em} \left[\int_{0}^{\infty} x^{2m+2}e^{-\frac{x^2}{(1-\rhoruij{2})}} \ dx \right]^2
\end{align}
Using \cite[Eq.~(3.461.2)]{GradRyz} and \cite[Eq.~(6.1.12)]{Stegun}, the integral in \eq{\ref{App: Integral for of FR in Rayleigh case}} is evaluated to be,
\begin{align*}
	\int_{0}^{\infty} x^{2m+2}e^{-\frac{x^2}{(1-\rhoruij{2})}} \ dx 
	& = \frac{(3/2)_m}{4}\sqrt{\pi}(1-\rhoruij{2})^{m + 3/2},
\end{align*}
where $(\cdot)_m$ is the Pochhammer symbol. Squaring and substituting into $\Exp{r_i r_k}$ yields, 
\begin{align*}	
	\Exp{r_{i}r_{k}} & \overset{(a)}{=} \frac{\pi}{4} (1-\rhoruij{2})^{2} \sum_{m=0}^{\infty} 
	\frac{(3/2)_{m} (3/2)_{m}}{(1)_{m}} \frac{\rhoruij{2m}}{m!} \\
	& \overset{(b)}{=}  \frac{\pi}{4} (1-\rhoruij{2})^{2}
	{}_{2}F_{1}\left(\frac{3}{2},\frac{3}{2};1;\left\lvert\rho_{ik}\right\rvert^2 \right),
\end{align*}
where $(a)$ uses the fact that $(1)_{m} = m!$ and $(b)$ replaces the infinite series with its Gaussian hypergeometric equivalent.  Taking the summation of $\Exp{r_i, r_k}$ over all RIS elements $N$, we get
$$
F = \underset{i \neq k}{\sum_{i=1}^{N} \sum_{k=1}^{N}} \dfrac{\pi}{4}\left( 1 - \left\lvert\rho_{ik}\right\rvert^2\right)^2 {}_{2}F_{1}\left(\frac{3}{2},\frac{3}{2};1;\left\lvert\rho_{ik}\right\rvert^2 \right),
$$
Which is identical to the correlated Rayleigh form of $F_{\mathrm{R}}$ in \cite[Eq.~(9)]{ISinghRayleigh}.
\vspace{-1em}
\section{Integral form for $\Exp{r_i r_k}$ with $\rhoruij{}=1$}\label{App: Integral form of E{ri rj}}
For high correlation and K-factor values, the expectation of correlated Ricean random variables in \eq{\ref{Eq: Whittaker M}} is computationally expensive to compute. In order to obtain an accurate result in such circumstances, the number of terms in the double summation becomes very large. Here, we propose the use of numerical integration to compute such expectations.

Let $\myVM{h}{sc}{} = \myVM{R}{ru}{1/2} \myVM{u}{ru}{}$ denote the normalized value of the scattered component of the UR-RIS channel in Sec.~\ref{Sec: Channel Model}. Then, 
\begin{align*}
	\Exp{r_i r_k} &= 
	\Exp{\Abs{\etaru \myVMIndex{a}{ru}{i}{} + \zetaru\myVMIndex{h}{sc}{i}{}}{} \Abs{\etaru \myVMIndex{a}{ru}{k}{} + \zetaru\myVMIndex{h}{sc}{k}{}}{} }.
\end{align*}
Now, let $\myVMIndex{h}{sc}{k}{} = \rho_{ik}\myVMIndex{h}{sc}{i}{} + e\sqrt{1 - \rhoruij{2}}$ where $e \sim \mathcal{CN}(0,1)$ and let $\rho_{ik} = \rhoruij{}e^{j\phi}$, $\myVMIndex{\bar{a}}{ru}{i}{} = \myVMIndex{a}{ru}{i}{}e^{j\phi}$, then
\begin{align*}
	\Exp{r_i r_k} &= 
	\mathbb{E}\Bigg\{ \Abs{\etaru\myVMIndex{\bar{a}}{ru}{i}{} + \zetaru\myVMIndex{\bar{h}}{sc}{i}{}}{} \\
	& \times \Abs{\etaru\myVMIndex{\bar{a}}{ru}{k}{} + \zetaru\left(\rhoruij{}\myVMIndex{\bar{h}}{sc}{i}{} + e\sqrt{1-\rhoruij{2}}\right)}{} \Bigg\},
\end{align*}
where $\myVMIndex{\bar{h}}{sc}{i}{} \sim \mathcal{CN}(0,1)$. Taking the expectation over $e$ and noting that this term gives the mean of a Ricean random variable, we have
\begin{align*}
	\Exp{r_i r_k} &=
	\frac{\sqrt{\pi} \zetaru \sqrt{1 - \rhoruij{2}}}{2} 
	\mathbb{E}\Bigg\{\Abs{\etaru\myVMIndex{\bar{a}}{ru}{i}{} + \zetaru\myVMIndex{\bar{h}}{sc}{i}{}}{} \\
	& \times \LagP{-\frac{\Abs{\etaru\myVMIndex{\bar{a}}{ru}{k}{} + \zetaru\rhoruij{}\myVMIndex{\bar{h}}{sc}{k}{}}{2}}{\zetaru^2 (1-\rhoruij{2})}}\Bigg\}.
\end{align*}
Finally, let $a_i = \etaru \myVMIndex{\bar{a}}{ru}{i}{}/\zetaru = a_{iR} + ja_{iI}$, $a_k = \etaru\myVMIndex{a}{ru}{k}{}/(\zetaru\rhoruij{}) = a_{jR} + ja_{jI}$, $b = \rhoruij{2}/(1-\rhoruij{2})$. Then we have,
\begingroup
\allowdisplaybreaks
\begin{align*}
	&\Exp{r_i r_j} \\ 
	&= \frac{\sqrt{\pi} \zetaru^2 \sqrt{1 - \rhoruij{2}}}{2} \mathbb{E}\Big\{\sqrt{(a_{iR} + X)^2 + (a_{iY} + Y)^2} \\ 
	& \hspace{3em} \times \LagP{-b((a_{jR} + X)^2 + (a_{jY} + Y)^2)} \Big\} \\
	& = \frac{\zetaru^2 \sqrt{1 - \rhoruij{2}}}{2\sqrt{\pi}} 
	\int_{-\infty}^{\infty} \int_{-\infty}^{\infty} \sqrt{(a_{iR} + X)^2 + (a_{iY} + Y)^2} \\
	& \times \LagP{-b((a_{jR} + X)^2 + (a_{jY} + Y)^2)} e^{-(x^2 + y^2)} \ dydx \\
	& \overset{(a)}{=} \frac{\zetaru^2 \sqrt{1 - \rhoruij{2}}}{2\sqrt{\pi}} \\
	& \hspace{1em} \times \int_{0}^{2\pi}\int_{0}^{\infty} r e^{-r^2} \sqrt{\Abs{a_i}{2} + r^2 + 2r\Abs{a_i}{}\cos(\theta - \theta_i)} \\
	& \hspace{1em} \times \LagP{-b(\Abs{a_k}{2} + r^2 + 2r\Abs{a_k}{}\cos(\theta - \theta_k))} \ d\theta dr,
\end{align*} 
\endgroup
where $(a)$ involved transforming the integrals from cartesian to polar form, $\theta_i = \angle \myVMIndex{a}{ru}{i}{}$ and $\theta_k = \angle \myVMIndex{a}{ru}{k}{}$. Computing these integrals over $1 \leq i,k \leq N$ for $i \neq k$ yields \eq{\ref{Eq: FR for high kru and rhoru}}. It is worth noting that the integral form of $\Exp{r_i r_k}$ is computationally more efficient than the one presented in \cite[Eq.~(42)]{Yacoub}, where $\Exp{r_i r_k}$ is given as an infinite summation of double integrals, both requiring integration over the region $[0,\infty)$.

For the benchmark case, correlation $\rhoruij{} = 1$, both the double summation \eq{\ref{Eq: Whittaker M}} and integral form of $\Exp{r_i r_k}$ are invalid since substituting $\rhoruij{} = 1$ results in an indeterminate answer. However, we can use the integral form and find its result as $\rhoruij{} \rightarrow 1$. Firstly notice that correlation is only present in the term outside the integrals and also inside the Laguerre function. Let $z = \Abs{a_k}{2} + r^2 + 2r\Abs{a_k}{}\cos(\theta - \theta_k)$ , then
\begin{align*}
	&\lim\limits_{\rhoruij{} \rightarrow 1} \sqrt{1 - \rhoruij{2}} \LagP{-\frac{\rhoruij{2}}{1 - \rhoruij{2}} z} \\
	&\overset{(a)}{=} \lim\limits_{\rhoruij{} \rightarrow 1} \sqrt{1 - \rhoruij{2}} {}_{1}F_{1}\left(-\frac{1}{2};1; -\frac{\rhoruij{2}}{1 - \rhoruij{2}} z\right) \\
	&\overset{(b)}{=} \lim\limits_{\rhoruij{} \rightarrow 1}
	\frac{\sqrt{\rhoruij{2}z}}{\Gamma(3/2)} \left[1 + \mathcal{O}\left(\frac{1-\rhoruij{2}}{\rhoruij{2} z}\right) \right] \\
	& = 2\sqrt{\frac{\Abs{a_k}{2} + r^2 + 2r\Abs{a_k}{}\cos(\theta - \theta_k)}{\pi}},
\end{align*}
where $(a)$ uses \cite[Eq.~(13.6.9)]{Stegun} to transform the Laguerre function into its confluent hypergeometric equivalent and $(b)$ uses \cite[Eq.~(13.1.5)]{Stegun} to replace the confluent hypergeometric function with its asymptotic equivalent. Substituting this limit into the integral yields,
\begin{align*}
	\Exp{r_i r_k} &= \frac{\zetaru^2}{\pi} \int_{0}^{2\pi} \int_{0}^{\infty} r e^{-r^2} \\
	& \hspace{1em} \times \sqrt{\Abs{a_i}{2} + r^2 + 2r\Abs{a_i}{}\cos(\theta - \theta_i)} \\
	& \hspace{1em} \times \sqrt{\Abs{a_i}{2} + r^2 + 2r\Abs{a_i}{}\cos(\theta - \theta_k)} \ d\theta dr,
\end{align*}
where we make the substitution of $a_k = a_i$ at $\rhoruij{} = 1$. Computing these integrals over $1 \leq i,k \leq N$ for $i \neq k$ yields \eq{\ref{Eq: FR for rhoru = 1}}.

\end{appendices}

\bibliographystyle{IEEEtran}
\bibliography{RIS_Paper_Journal}

\begin{thebibliography}{10}
\providecommand{\url}[1]{#1}
\csname url@samestyle\endcsname
\providecommand{\newblock}{\relax}
\providecommand{\bibinfo}[2]{#2}
\providecommand{\BIBentrySTDinterwordspacing}{\spaceskip=0pt\relax}
\providecommand{\BIBentryALTinterwordstretchfactor}{4}
\providecommand{\BIBentryALTinterwordspacing}{\spaceskip=\fontdimen2\font plus
\BIBentryALTinterwordstretchfactor\fontdimen3\font minus
  \fontdimen4\font\relax}
\providecommand{\BIBforeignlanguage}[2]{{%
\expandafter\ifx\csname l@#1\endcsname\relax
\typeout{** WARNING: IEEEtran.bst: No hyphenation pattern has been}%
\typeout{** loaded for the language `#1'. Using the pattern for}%
\typeout{** the default language instead.}%
\else
\language=\csname l@#1\endcsname
\fi
#2}}
\providecommand{\BIBdecl}{\relax}
\BIBdecl

\bibitem{GMD}
K.~{Ying}, Z.~{Gao}, S.~{Lyu}, Y.~{Wu}, H.~{Wang} \emph{et~al.}, ``G{MD}-based
  hybrid beamforming for large reconfigurable intelligent surface assisted
  millimeter-wave massive {MIMO},'' \emph{IEEE Access}, vol.~8, pp.
  19\,530--19\,539, 2020.

\bibitem{Rank_Imp}
{\"O}.~{{\"O}zdogan}, E.~{Bj{\"o}rnson}, and E.~G. {Larsson}, ``Using
  intelligent reflecting surfaces for rank improvement in {MIMO}
  communications,'' in \emph{Proc. {IEEE} ICCASP}, 2020, pp. 9160--9164.

\bibitem{Max_Min}
Q.-U.-A. Nadeem, A.~Kammoun, A.~Chaaban, M.~Debbah, and M.-S. Alouini,
  ``Asymptotic max-min {SINR} analysis of reconfigurable intelligent surface
  assisted {MISO} systems,'' \emph{{IEEE} Trans. Wireless Commun.}, vol.~19,
  no.~12, pp. 7748--7764, 2020.

\bibitem{OptPHI}
Q.~{Wu} and R.~{Zhang}, ``Intelligent reflecting surface enhanced wireless
  network via joint active and passive beamforming,'' \emph{{IEEE} Trans.
  Wireless Commun.}, vol.~18, no.~11, pp. 5394--5409, 2019.

\bibitem{Prop_Gauss}
H.~Yu, H.~D. Tuan, A.~A. Nasir, T.~Q. Duong, and H.~V. Poor, ``Joint design of
  reconfigurable intelligent surfaces and transmit beamforming under proper and
  improper {G}aussian signaling,'' \emph{{IEEE} J. Sel. Areas Commun.},
  vol.~38, no.~11, pp. 2589--2603, 2020.

\bibitem{B5G}
G.~Yu, X.~Chen, C.~Zhong, D.~W. Kwan~Ng, and Z.~Zhang, ``Design, analysis, and
  optimization of a large intelligent reflecting surface-aided {B5G} cellular
  internet of things,'' \emph{{IEEE} Internet Things J.}, vol.~7, no.~9, pp.
  8902--8916, 2020.

\bibitem{Nadeem}
Q.~{Nadeem}, H.~{Alwazani}, A.~{Kammoun}, A.~{Chaaban}, M.~{Debbah}
  \emph{et~al.}, ``Intelligent reflecting surface-assisted multi-user {MISO}
  communication: {C}hannel estimation and beamforming design,'' \emph{IEEE Open
  Journal of the Communications Society}, vol.~1, pp. 661--680, 2020.

\bibitem{Trans_Design}
J.~{Zhang}, J.~{Liu}, S.~{Ma}, C.~{Wen}, and S.~{Jin}, ``Transmitter design for
  large intelligent surface-sssisted {MIMO} wireless communication with
  statistical {CSI},'' in \emph{Proc. {IEEE} ICC Workshops}, 2020, pp. 1--5.

\bibitem{StatCSI}
J.~{Dang}, Z.~{Zhang}, and L.~{Wu}, ``Joint beamforming for intelligent
  reflecting surface aided wireless communication using statistical {CSI},''
  \emph{China Communications}, vol.~17, no.~8, pp. 147--157, 2020.

\bibitem{Nadeem2}
Q.-U.-A. Nadeem, A.~Chaaban, and M.~Debbah, ``Opportunistic beamforming using
  an intelligent reflecting surface without instantaneous {CSI},'' \emph{{IEEE}
  Wireless Commun. Lett.}, vol.~10, no.~1, pp. 146--150, 2021.

\bibitem{Two_Timescale}
M.~M. {Zhao}, Q.~{Wu}, M.~J. {Zhao}, and R.~{Zhang}, ``Intelligent reflecting
  surface enhanced wireless network: Two-timescale beamforming optimization,''
  \emph{{IEEE} Trans. Wireless Commun.}, pp. 1--1, 2020.

\bibitem{GammaApproxY}
N.~K. Kundu and M.~R. McKay, ``Ris-assisted {MISO} communication: Optimal
  beamformers and performance analysis,'' in \emph{2020 IEEE Globecom Workshops
  (GC Wkshps}, 2020, pp. 1--6.

\bibitem{Relay_Comp}
A.~A. {Boulogeorgos} and A.~{Alexiou}, ``Performance analysis of reconfigurable
  intelligent surface-assisted wireless systems and comparison with relaying,''
  \emph{IEEE Access}, vol.~8, pp. 94\,463--94\,483, 2020.

\bibitem{Perf_Analy}
Q.~Tao, J.~Wang, and C.~Zhong, ``Performance analysis of intelligent reflecting
  surface aided communication systems,'' \emph{{IEEE} Commun. Lett.}, vol.~24,
  no.~11, pp. 2464--2468, 2020.

\bibitem{Asymp_Opt}
M.~{Jung}, W.~{Saad}, M.~{Debbah}, and C.~S. {Hong}, ``Asymptotic optimality of
  reconfigurable intelligent surfaces: {P}assive beamforming and achievable
  rate,'' in \emph{Proc. {IEEE} ICC}, 2020, pp. 1--6.

\bibitem{ISinghRayleigh}
I.~{Singh}, P.~J. {Smith}, and P.~A. Dmochowski, ``Optimal {SNR} analysis for
  single-user {RIS} systems,'' in \emph{Proc. {IEEE} PIMRC}, 2021, pp. 1--6.

\bibitem{Yacoub}
J.~R. {Mendes} and M.~D. {Yacoub}, ``A general bivariate {R}icean model and its
  statistics,'' \emph{{IEEE} Trans. Veh. Technol.}, vol.~56, no.~2, pp.
  404--415, 2007.

\bibitem{Zogas}
D.~Zogas and G.~Karagiannidis, ``Infinite-series representations associated
  with the bivariate {R}ician distribution and their applications,''
  \emph{{IEEE} Trans. Commun.}, vol.~53, no.~11, pp. 1790--1794, 2005.

\bibitem{Stegun}
M.~Abramowitz and I.~A. Stegun, \emph{Handbook of Mathematical Functions with
  Formulas, Graphs, and Mathematical Tables}.\hskip 1em plus 0.5em minus
  0.4em\relax Dover, 1964.

\bibitem{2244734}
H.~Finner, ``A generalization of {H}{\"o}lder's inequality and some probability
  inequalities,'' \emph{The Annals of Probability}, vol.~20, no.~4, pp.
  1893--1901, 1992.

\bibitem{CMiller}
C.~L. {Miller}, P.~J. {Smith}, P.~A. {Dmochowski}, H.~{Tataria}, and
  M.~{Matthaiou}, ``Analytical framework for full-dimensional massive {MIMO}
  with ray-based channels,'' \emph{{IEEE} J. Sel. Topics Signal Process.},
  vol.~13, no.~5, pp. 1181--1195, 2019.

\bibitem{GammaApproxSNR}
S.~{Atapattu}, C.~{Tellambura}, and H.~{Jiang}, ``A mixture gamma distribution
  to model the {SNR} of wireless channels,'' \emph{{IEEE} Trans. Wireless
  Commun.}, vol.~10, no.~12, pp. 4193--4203, 2011.

\bibitem{8422818}
H.~{Tataria}, P.~J. {Smith}, A.~F. {Molisch}, S.~{Sangodoyin}, M.~{Matthaiou}
  \emph{et~al.}, ``Spatial correlation variability in multiuser systems,'' in
  \emph{Proc. {IEEE} ICC}, 2018, pp. 1--7.

\bibitem{DigCom}
J.~G. Proakis and M.~Salehi, \emph{Digital Communications}.\hskip 1em plus
  0.5em minus 0.4em\relax McGraw-Hill Higher Education, 2008.

\bibitem{7976368}
H.~{Tataria}, P.~J. {Smith}, L.~J. {Greenstein}, P.~A. {Dmochowski}, and
  M.~{Matthaiou}, ``Impact of {L}ine-of-{S}ight and unequal spatial correlation
  on uplink {MU-MIMO} systems,'' \emph{{IEEE} Wireless Commun. Lett.}, vol.~6,
  no.~5, pp. 634--637, 2017.

\bibitem{Miller}
K.~S. Miller, \emph{Complex Stochastic Processes: An Introduction to Theory and
  Application}.\hskip 1em plus 0.5em minus 0.4em\relax Addison-Wesley
  Publishing Company, Advanced Book Program, 1974.

\bibitem{GradRyz}
I.~S. Gradshteyn and I.~M. Ryzhik, \emph{Table of Integrals, Series, and
  Products}.\hskip 1em plus 0.5em minus 0.4em\relax Elsevier Inc., 2007.

\end{thebibliography}

\end{document}